\documentclass[aps,prx,print,twocolumn,showpacs,longbibliography]{revtex4-1} %
\usepackage{amsthm}
\usepackage{amsmath}
\usepackage{amsfonts}
\usepackage{graphicx}
\usepackage{dcolumn}
\usepackage{longtable}
\usepackage{mathrsfs}
\usepackage{setspace}
\usepackage{multirow}
\usepackage{float}
\usepackage{verbatim}
\usepackage{algorithmicx}
\usepackage{algorithm}
\usepackage{algpseudocode}
\usepackage{float}
\usepackage{afterpage}
\usepackage{xfrac}
\usepackage{blindtext}
\usepackage{enumitem}
\usepackage{amssymb}
\usepackage{tabularx}
\newcolumntype{D}{>{\centering\arraybackslash}X}

\usepackage{hyperref}
\hypersetup{
    bookmarksnumbered=true, 
    unicode=false, 
    pdfstartview={FitH}, 
    pdftitle={}, 
    pdfauthor={}, 
    pdfsubject={}, 
    pdfcreator={}, 
    pdfproducer={}, 
    pdfkeywords={}, 
    pdfnewwindow=true, 
    colorlinks=true, 
    linkcolor=blue, 
    citecolor=blue, 
    filecolor=blue, 
    urlcolor=blue 
}

\usepackage{epstopdf}

\usepackage{bbold} 

\newcommand{\ket}[1]{|#1\rangle}

\newcommand{\Rep}{\operatorname{Re}}
\newcommand{\Imp}{\operatorname{Im}}
\newtheorem{theorem}{Theorem}
\newtheorem{definition}{Definition}

\newtheorem{thm}{Theorem}

\newtheorem{lemma}[theorem]{Lemma}

\newcommand{\tjyoder}[1]{{#1}}
\newcommand{\ghl}[1]{{#1}}

\begin{document}

\title{The methodology of resonant equiangular composite quantum gates}

\author{Guang Hao Low, Theodore J. Yoder, and Isaac L. Chuang}
\affiliation{Department of Physics and Research Laboratory of Electronics, Massachusetts Institute of Technology, Cambridge, Massachusetts 02139, USA}
\date{\today}

\begin{abstract}
The creation of composite quantum gates that implement quantum response functions $\hat{U}(\theta)$ dependent on some parameter of interest $\theta$ is often more of an art than a science. 
Through inspired design, a sequence of $L$ primitive gates also depending on $\theta$ can engineer a highly nontrivial $\hat{U}(\theta)$ that enables myriad precision metrology, spectroscopy, and control techniques. 
However, discovering new, useful examples of $\hat{U}(\theta)$ requires great intuition to perceive the possibilities, and often brute-force to find optimal implementations.
We present a systematic and efficient methodology for composite gate design of arbitrary length, where phase-controlled primitive gates all rotating by $\theta$ act on a single spin. We fully characterize the realizable family of $\hat{U}(\theta)$, provide an efficient algorithm that decomposes a choice of $\hat{U}(\theta)$ into its shortest sequence of gates, and show how to efficiently choose an achievable $\hat{U}(\theta)$ that for fixed $L$, is an optimal approximation to objective functions on its quadratures. A strong connection is forged with \emph{classical} discrete-time signal processing, allowing us to
swiftly construct, as examples, compensated gates with optimal bandwidth that implement arbitrary single spin rotations with sub-wavelength spatial selectivity.
\end{abstract}

\pacs{03.67.Ac, 03.67.Lx, 82.56.Jn, 84.30.Vn}

\maketitle

\section{Introduction}
Composite quantum gates~\cite{Vandersypen2005,Freeman1998spin} are indispensable to many important quantum technologies, such as nuclear magnetic resonance, magnetic resonance imaging, quantum sensing, and quantum computation~\cite{Nielsen2004}.
Their versatility arises from cunningly chosen sequences of $L$ primitive quantum gates that produce an effective quantum gate $\hat{U}$ with a more desirable dependence on some parameters of interest $\theta$, such as drive amplitude or background magnetic fields. As a function of $\theta$, the quantum response function $\hat{U}(\theta)$ can be tailored to
amplify weak signals beyond the statistics of repetition, and suppress noise without measurement. Finding such useful composite gates is thus the subject of intense research.

Even in single spin systems, \ghl{the focus of this work,} extraordinary richness can be found in the possible forms of $\hat{U}(\theta)$.
For example: NMR spectroscopy, where minute chemical shifts $\theta$ are made clearer through $\hat{U}(\theta)$~\cite{Levitt1986,Freeman1998spin,khaneja2005};
 Heisenberg-limited quantum imaging, where $\hat{U}(\theta)$ is made sensitive to sub-wavelength position variations $\theta$~\cite{arai2015fourier} without aliasing~\cite{Haberle2013high,Low2015};
 sub-wavelength spatial addressing where arbitrary quantum gates $\hat{U}(\theta)$ with low crosstalk are applied on spin arrays~\cite{Torosov2011,Merrill2014,Low2015}; quantum phase estimation for atomic clocks~\cite{Vanier2005} or tomography~\cite{Kimmel2015robust}, where extremely small drifts $\theta$ are amplified by factor $L$ in the gradient of $\hat{U}(\theta)$; error-compensation, where fractional control errors $\theta$ are exponentially suppressed like $\hat{U}(\theta)=\hat{U}(0)+\mathcal{O}(\theta^{\text{poly}(L)})$~\cite{Wimperis1994,Cummins2003,Brown2004,Uhrig2007,wang2014composite,Jones2013,Low2014,Soare2014}; quantum algorithms such as amplitude amplification~\cite{brassard1998quantum,Yoder2014} where a computation $\hat{U}(\theta)$ proceeds with input $\theta$. The discovery of other \tjyoder{applications} would be expedited if a useful characterization of all achievable $\hat{U}(\theta)$ could be found.
 
However, the road to new results does not end with a choice of $\hat{U}(\theta)$. \ghl{Given some reasonable system-dependent quantum control~\cite{glaser2015training}, its realization as a composite gate must be found}. Only with rare exceptions~\cite{Uhrig2007,Torosov2011,Low2015} and great effort are optimal, arbitrary length examples found in closed-form. Thus celebrated techniques including gradient ascent algorithms~\cite{khaneja2005} and pseudospectral methods~\cite{Ruths2012,Ross2012} formulate this as a systematic optimization problem that can be solved by brute force, but unfortunately with an exponential worst-case runtime $\mathcal{O}(e^L)$ for finding optimally short $L$ approximations.  Finding efficient solutions to various control problems would expand the potential of long composite gates, for which the most sophisticated quantum response functions can be constructed.

A tantalizing similarity is seen in discrete-time signal processing~\cite{Oppenheim2010}. Optimal finite-impulse response filters~\cite{Harris1978} can be designed simply by choosing the lowest-degree $L$ polynomial that is the optimal approximation to a desired frequency response, from which an optimal and exact implementation is computed -- made possible by efficient algorithms for both steps.
It is recognized that composite gates implement a filter on physical parameters~\cite{Soare2014,Kabytayev2014}, and the use of polynomials in quantum response functions is well-known~\cite{Li2009,Li2011}. Unfortunately, quantum constraints can render computing these polynomials and their optimal implementation a hard problem. It would be a tremendous advance if efficient solutions to these problems could be found, and even more so if the countless results from the exalted history of \emph{classical} discrete-time signal processing were transferable to the \emph{quantum} realm.

\ghl{One noteworthy step in this direction is the Shinnar-LeRoux algorithm~\cite{Shinnar1989,Pauly1991} and its refinements~\cite{Ikonomidou2000,Lee2007SLR,Grissom2014SLR}, which have so far been restricted to the field of magnetic resonance imaging. There, $\theta$ represents the amplitude of background magnetic fields and manifests as an off-resonant rotation. Given otherwise perfect and arbitrary single-spin control, this approach enables the efficient design of $\hat{U}(\theta)$ by a connection to finite-impulse response filters. Unfortunately, extending the concept to situations with different controls and additional restrictions, particularly the case of on-resonant compensating pulse sequences, appears to have been difficult.} 

\ghl{We contribute to this aspiration with a methodology, similar to the Shinnar-LeRoux algorithm}, for designing composite gates built from $L$ \ghl{resonant} primitive gates acting on a single spin, all rotating by angle $\theta$, but not necessarily with the same phase. We rigorously characterize the quantum response functions $\hat{U}(\theta)$ achievable in this manner, prove how we can efficiently \emph{choose} its form with trigonometric polynomials of degree $L$, and prove how its optimal implementation with the shortest sequence of primitive gates can be efficiently computed. In the process, a connection is made with discrete-time signal processing that allows us to inherit some of its machinery. These powerful tools expedite composite gate design, which we demonstrate with three optimal important examples: (1) narrowband and broadband composite population inversion gates; (2) compensated NOT gates with optimal bandwidth; (3) spatially-selective arbitrary single spin rotations below the diffraction limit.

We elucidate in Sec.~\ref{Sec:Model} the controls available to our single qubit system and demonstrate how equiangular composite gates motivate in Sec.~\ref{Sec:Design} the intuitive concept of choosing polynomials to explicitly define the quantum response function $\hat{U}(\theta)$. This is made rigorous and tractable in Sec.~\ref{Sec:characterization} by a simple characterization of the space of achievable $\hat{U}(\theta)$, and showing in Sec.~\ref{Sec:Implementation} how an optimal implementation of any such $\hat{U}(\theta)$ can be efficiently computed. We then show in Sec.~\ref{Sec:Computation} how an achievable $\hat{U}(\theta)$ can be efficiently computed from a partial specification with polynomials that describe only the composite gate fidelity or transition probability response functions. This  enables in Sec.~\ref{Sec:Selection} the efficient design of achievable $\hat{U}(\theta)$ by inheriting from discrete-time signal processing existing polynomials and efficient polynomial design algorithms. Together, these provide a methodology outlined in Sec.~\ref{Sec:methodology} for the systematic and efficient design of composite quantum gates. Use of this methodology is demonstrated in Sec.~\ref{Sec:Example} with the creation \tjyoder{of} optimal bandwidth compensated gates in Sec.~\ref{Sec:OBGates} that provide an optimal solution in Sec.~\ref{Sec:SpatialGates} to the problem of implementing sub-wavelength spatially selective arbitrary quantum gates. Further directions are discussed in Sec.\ref{Sec:Conclusion}.

\section{Model}
\label{Sec:Model}
The unitary quantum response function $\hat{U}(\theta)$ describes how a quantum system evolves under the influence of some parameter $\theta$ of interest. We consider the generic system of a \ghl{resonantly} driven single spin, and present a construction for composite quantum gates that motivates a powerful approach for designing their implemented response functions.

A two-level system driven by time-dependent Rabi frequency $\Omega(t)$ and phase $\phi(t)$ is controlled by the Hamiltonian $\hat{H}_c(t) = \frac{\Omega(t)}{2}\hat{\sigma}_{\phi(t)}$, where ${\hat{\sigma}_{\phi}=\cos(\phi) \hat{\sigma}_x+ \sin(\phi)\hat{\sigma}_y}$ and $\sigma_{x,y,z}$ are Pauli matrices.  Taking $\phi(t)=\phi$ to be constant over time $\tau$, we generate the primitive rotation
\begin{align}
\label{eq:PrimitiveRotation}
\hat{R}_{\phi}(\theta)=e^{-i \frac{\theta}{2}\hat{\sigma}_{\phi}},\quad \theta=\int_0^\tau \Omega(t)dt.
\end{align}
Note that $\Omega(t)$ might only be partially under our control and thus contain an uncontrollable residual signal. Our parameter of interest is thus $\theta$, which captures both effects.

An equiangular composite gate of length $L$ is built from these primitive rotations, each with the same rotation amplitude $\theta$, but with varying phases $\vec\phi=(\phi_1,...,\phi_L)$. This produces a $\theta$-dependent unitary $\hat{U}(\theta)$, or quantum response function, of the form
\begin{align}
\label{U_Expansion}
\hat{U}(\theta)&=\hat{R}_{\phi_L}(\theta)\hat{R}_{\phi_{L-1}}(\theta)\dotsc\hat{R}_{\phi_{1}}(\theta), \\ \nonumber
&=\sum_{j=0}^{L}(-i)^j \sin^j{\left(\frac{\theta}{2}\right)}\cos^{L-j}{\left(\frac{\theta}{2}\right)}\hat\Phi_{L,j},
\end{align}
where $\hat\Phi_{L,j}=\hat \sigma_x^j(\Rep[\Phi_{L,j}]\hat{\mathbb{1}}+i\Imp[\Phi_{L,j}]\hat \sigma_z)$, and the phase sums $\Phi_{L,j}$ are defined through the recurrence~\cite{Low2014}
\begin{align}
\label{Eq:PhaseSumRecurrance}
\Phi_{k,j}&=\Phi_{k-1,j}+\Phi_{k-1,j-1}e^{i(-1)^{j+1}\phi_j},\\ \nonumber \Phi_{0,0}&=1,\quad \Phi_{0,j\neq0}=0,
\end{align}
performed over $j=0,1,...,k$, then $k=1,2,...,L$.

Now, we make the crucial observation that $\hat{U}(\theta)$ is polynomial of degree $L$ in $x\equiv\cos{(\theta/2)}$ and $y\equiv\sin{(\theta/2)}$ with a particularly elegant representation. Using the trigonometric relation $x^2+y^2=1$, $\hat{U}(\theta)$ has the form
\begin{align}
\label{Eq:U_Polynomial}
\hat{U}(\theta)=
\resizebox{.75\columnwidth}{!}{
$
\begin{cases}
A(x)\hat{\mathbb{1}}+i B(x)\hat\sigma_z+i C(y)\hat\sigma_x+i D(y)\hat\sigma_y , & L\text{ odd}, \\
A(x)\hat{\mathbb{1}}+i B(x)\hat\sigma_z+i x C(y)\hat\sigma_x+i x D(y)\hat\sigma_y,& L\text{ even}.
\end{cases}
$
}
\end{align}
where $A(x),B(x),C(y),D(y)$ are polynomials of degree at most $L$ with coefficients $a_k,b_k,c_k,d_k$ \tjyoder{($k=0,1,\dots,L$)} respectively. In the following, $A,B,C,D$ without arguments are understood to be functions of the $x,y$ seen in Eq.~\ref{Eq:U_Polynomial}. As the tuple $(A,B,C,D)$ is an equivalent representation of $\hat{U}(\theta)$, we refer to both interchangeably. In particular, \emph{achievable} tuples are those than can be realized by some composite gate of Eq.~\ref{U_Expansion}:
\begin{definition}[Achievable polynomial tuples]
\label{Def:achievable}
A tuple of polynomials $(A,B,C,D)$ is achievable if $\exists L,\;\vec\phi\in\mathbb{R}^L$ s.t. $\hat{U}(\theta)=\hat{R}_{\phi_L}(\theta)\hat{R}_{\phi_{L-1}}(\theta)\dotsc\hat{R}_{\phi_{1}}$ has the form of Eq.~\ref{Eq:U_Polynomial}.
\end{definition}

We are often interested in only a few components of $(A,B,C,D)$. For example, the partial tuple $(A,\cdot,C,\cdot)$ fully defines the
gate fidelity response function $F_\chi(\theta)=\frac{1}{4}|\text{Tr}[\hat{R}^{\dagger}_0(\chi)\hat{U}]|^2$
with respect to some target gate $\hat{R}_0(\chi)$.
\begin{align}
\label{Eq:F_Polynomial}
F_\chi(\theta)=
\begin{cases}
\left|\cos{\left(\frac{\chi}{2}\right)}A-\sin{\left(\frac{\chi}{2}\right)} C\right|^2 , & L\text{ odd}, \\
\left|\cos{\left(\frac{\chi}{2}\right)}A-x\sin{\left(\frac{\chi}{2}\right)} C\right|^2,& L\text{ even}.
\end{cases}
\end{align}
Similarly, $(A,B,\cdot,\cdot)$ or $(\cdot,\cdot,C,D)$ fully defines the transition probability response function $p(\theta)=|\langle 0|\hat{U}|1\rangle|^2$,
\begin{align}
\label{Eq:P_Polynomial}
p(\theta)=1-A^2-B^2=(C^2+D^2)
\scriptstyle
\begin{cases}
1 , & L\text{ odd}, \\
x^2,& L\text{ even}.
\end{cases}
\end{align}
We refer to a tuple with $n$ empty slots as an $n$-partial tuple. An $n$-partial tuple is achievable if it is consistent with some achievable tuple.

A brute-force approach to composite gate design is minimizing an objective function for $\hat{U}(\theta)$ over a space $L\in\mathbb N, \vec{\phi}\in\mathbb{R}^L$. Though useful examples have been discovered in this manner, such an approach is highly unappealing. In addition to being inefficient with a runtime $\mathcal{O}(e^L)$, there is no guarantee that a globally optimal solution will be found. Furthermore, the procedure provides little of the necessary insight into possible $\hat{U}(\theta)$ for envisioning further novel applications.

\section{Systematic and efficient design of optimal composite gates}
\label{Sec:Design}
The functional form of $\hat{U}(\theta)$ hints at a powerful methodology for composite gate design via \emph{choices} of the polynomials $(A,B,C,D)$ of degree $L$. This ambition must solve long-standing problems: 
\begin{description}[align=left,noitemsep]
\item [(P1)] An insightful characterization of achievable $(A,B,C,D)$ to eliminate the traditional guesswork in envisioning novel quantum response functions and their dependence on $\vec{\phi}$.
\item [(P2)] An efficient algorithm to compute the optimal $\vec{\phi}$ implementing an achievable $(A,B,C,D)$, in contrast to the intractable  random search in time $\mathcal{O}(e^L)$ of current state-of-art~\cite{Low2015}. 
\item [(P3)] An efficient algorithm to compute an achievable $(A,B,C,D)$ from achievable partial tuples e.g. $(A,\cdot,C,\cdot)$, as might be encountered with common objective functions for Eq.~\ref{Eq:F_Polynomial},~\ref{Eq:P_Polynomial}. 
\item [(P4)] An efficient algorithm for computing achievable partial tuples optimal for some objective function.
\end{description}

Our main technical advances are precisely the resolution of problems (1-4). We describe in a simple and intuitive manner the set of achievable $(A,B,C,D)$, and provide efficient algorithms for solving what has traditionally been the hardest aspects of composite gate design. In particular, a beautiful connection is made with the historic field of discrete-time signal processing that allows allows us to inherit much of its prior work in polynomial design. In this manner, the inspired art of composite gates is transformed into a systematic science. Optimal composite gates are simply polynomials optimal for the objective function, and these polynomials can be found efficiently.

\subsection{Characterization of quantum response functions}
\label{Sec:characterization}
We characterize here achievable choices of quantum response functions $(A,B,C,D)$ in a manner independent of $\vec\phi$, hence resolving problem (\textbf{P1}). By providing insight into the forms of possible $\hat{U}(\theta)$, we also obtain a quantitative explanation for the remarkable versatility of composite gates. Achievability constraints on the polynomials $(A,B,C,D)$ are as follows:
\begin{thm}[Achievable tuples]
\label{Thm:achievableABCD}
A tuple of polynomials $(A,B,C,D)$ of degree at most $L$ is achievable iff all the following are true:
\\
(1) $A,B,C,D$ are real.
\\
(2) $A(1)=1$\space or\space $B(1)=0$. 
\\
(3) $
\Bigl\{
\begin{array}{@{}ll}
 A,B,C,D \text{ are odd} , &   L\text{ odd}, \\
 A,B \text{ are even and } C,D \text{ are  odd},&  L\text{ even}.
\end{array}
$
\\
(4) $1= 
\Bigl\{
\begin{array}{@{}ll}
 A(x)^2+ B(x)^2+C(y)^2+ D(y)^2 , & L\text{ odd}, \\
 A(x)^2+ B(x)^2+x^2C(y)^2+ x^2D(y)^2,& L\text{ even}.
\end{array}
$
\end{thm}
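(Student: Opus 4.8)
The statement is an ``iff,'' so the plan is to prove the two directions separately: necessity (achievable $\Rightarrow$ (1)--(4)) by direct computation, and sufficiency ((1)--(4) $\Rightarrow$ achievable) by an explicit inductive reconstruction, which is where the difficulty lies. For necessity I would start from a product $\hat U(\theta)=\hat R_{\phi_L}(\theta)\cdots\hat R_{\phi_1}(\theta)$ and verify each condition in turn. Reality (1) I read off from the expansion in Eq.~\ref{U_Expansion}: $(-i)^j$ is real for even $j$ and imaginary for odd $j$, while $\hat\sigma_x^{\,j}$ is $\hat{\mathbb 1}$ or $\hat\sigma_x$, so collecting the coefficients of $\hat{\mathbb 1},\hat\sigma_z,\hat\sigma_x,\hat\sigma_y$ (using $\hat\sigma_x\hat\sigma_z=-i\hat\sigma_y$) yields real $A,B,C,D$. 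Parity (3) follows from $\hat R_\phi(\theta+2\pi)=e^{-i\pi\hat\sigma_\phi}\hat R_\phi(\theta)=-\hat R_\phi(\theta)$, giving $\hat U(\theta+2\pi)=(-1)^L\hat U(\theta)$; since $\theta\mapsto\theta+2\pi$ acts as $(x,y)\mapsto(-x,-y)$, matching against Eq.~\ref{Eq:U_Polynomial} forces the stated parities. Unitarity (4) holds because each $\hat R_\phi(\theta)\in SU(2)$, so $\det\hat U=1$, and the determinant of Eq.~\ref{Eq:U_Polynomial} is exactly the stated sum of squares. Condition (2) comes from $\theta=0$, where every $\hat R_\phi(0)=\hat{\mathbb 1}$ forces $\hat U(0)=\hat{\mathbb 1}$; with $C(0)=D(0)=0$ from parity this gives $A(1)=1$ and $B(1)=0$, which in particular implies the disjunction of condition (2).

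For sufficiency I would argue by strong induction on $L$ through a degree-reduction that inverts the product structure. Passing to $z=e^{i\theta/2}$, so $x=\tfrac12(z+z^{-1})$ and $y=\tfrac1{2i}(z-z^{-1})$, each primitive rotation factorizes as
\begin{equation}
\hat R_\phi(\theta)=z^{-1}P_\phi^{+}+z\,P_\phi^{-},\qquad P_\phi^{\pm}=\tfrac12(\hat{\mathbb 1}\pm\hat\sigma_\phi),
\end{equation}
so $\hat U$ becomes an $SU(2)$-valued Laurent polynomial $\hat U(z)=\sum_{k=-L}^{L}\hat U_k z^k$ whose coefficients are determined by $(A,B,C,D)$ via Eq.~\ref{Eq:U_Polynomial}. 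Given a tuple obeying (1)--(4), I would read off the top coefficient $\hat U_L$, show it is rank one with range in the $(-1)$-eigenspace of a \emph{real} $\hat\sigma_{\phi_L}$, and peel that rotation off by forming $\hat R_{\phi_L}(\theta)^\dagger\hat U(\theta)$, whose $z^{L+1}$ term vanishes because $P^+_{\phi_L}P^-_{\phi_L}=0$ and which therefore has degree at most $L-1$. After checking that the reduced tuple again satisfies (1)--(4) with $L\to L-1$ and opposite parity, the induction terminates at the base case $L=0$, where (1)--(4) force $(A,B,C,D)=(1,0,0,0)$, realized by the empty sequence. This reconstruction is constructive, and reading off $\phi_L$ at each step is exactly the efficient implementation algorithm promised for problem (\textbf{P2}).

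The crux, and the step I expect to be hardest, is this reduction: proving from conditions (1)--(4) alone that the leading coefficient $\hat U_L$ is rank one with range in the equatorial family $\{\ker P^{+}_{\phi}:\phi\in\mathbb R\}$, so that a single \emph{real} phase annihilates every top-degree term simultaneously, and then verifying that the peeled polynomials stay within the allowed parity class. Unitarity (4) is what makes one real parameter enough to cancel the entire top coefficient, since it forces that coefficient to be rank deficient; reality and parity (1),(3) are what keep $\phi_L$ real and the reduced tuple admissible; and condition (2) anchors the recursion at $\theta=0$ while fixing the otherwise unobservable global sign. Equivalently, the whole reduction can be recast as a Fej\'er--Riesz spectral factorization of the transition-probability polynomial $p(\theta)=C^2+D^2$ of Eq.~\ref{Eq:P_Polynomial} followed by recovery of the phases---the quantum counterpart of the Shinnar--LeRoux/FIR design procedure, which makes the signal-processing connection of the paper precise.
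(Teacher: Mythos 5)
Your proposal is correct and takes essentially the same route as the paper: necessity follows by direct computation from Eq.~\ref{U_Expansion} (your determinant argument for condition (4) and $2\pi$-shift argument for condition (3) are minor variants of the paper's trace and coefficient-collection arguments), and sufficiency is proved by exactly the inductive peeling $\hat{U}(\theta)=\hat{R}_{\phi_L}(\theta)\hat{V}(\theta)$ that the paper delegates to Lem.~\ref{Thm:phifromABCD}, with $\phi_L$ fixed by a ratio of leading coefficients whose modulus equals $1$ precisely because unitarity forces the top coefficient to be rank deficient (Eq.~\ref{Eq:choosephi}). The only difference is bookkeeping: you track Laurent coefficients of $\hat{U}$ in $z=e^{i\theta/2}$ and projector algebra, whereas the paper tracks the phase sums $\Phi_{L,j}$ through Eqs.~\ref{Eq:PhiFromPoly}--\ref{Eq:choosephi}; both yield the same constructive $\mathcal{O}(\text{poly}(L))$ compilation algorithm.
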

\begin{proof}
In forward direction, (1) and (3) are true by applying the trigonometric substitution $x^2+y^2=1$ in Eq.~\ref{U_Expansion} and collecting coefficients of $\hat{\mathbb 1},\hat\sigma_{x,y,z}$. (2) is true as $\hat{U}(0)=\hat{\mathbb 1}$ in Eq.~\ref{U_Expansion}. (4) is true as $\hat{U}$ is unitary so $\hat {U}^{\dagger}\hat{U}=\hat{\mathbb 1}$ and $\frac{1}{2}\text{Tr}[\hat {U}^{\dagger}\hat{U}]$ evaluated via Eq.~\ref{Eq:U_Polynomial} produces
\begin{align}
\label{Eq:Unitarity}
1 
=
\Bigl\{
\begin{array}{@{}ll}
A(x)^2+ B(x)^2+C(y)^2+ D(y)^2 , & L\text{ odd}, \\
A(x)^2+ B(x)^2+x^2C(y)^2 + x^2D(y)^2,& L\text{ even}.
\end{array}
\end{align}
In the reverse direction, we need to show that any $(A,B,C,D)$ satisfying (1-4) is achievable in the sense of Def.~\ref{Def:achievable}. We leave these steps to Lem.~\ref{Thm:phifromABCD}.
\end{proof}

Conditions (1-4) for achievable $(A,B,C,D)$ appear fairly general, which allows for great flexibility in choosing arbitrary response functions. They are also understandable and intuitive. A characterization of achievable partial tuples is also useful. Not all quadratures of $\hat{U}(\theta)$ might be relevant to an objective function, and optimizing over a subset $(A,B,C,D)$ could be easier. In the following, we examine how the unitarity constraint of condition (4) is weakened for all possible $2$-partial tuples.
\begin{thm}[Achievable $2$-partial tuples]
\label{Thm:achievablePartialABCD}
Assuming $A,B,C,D$ satisfy conditions (1-3) of Thm.~\ref{Thm:achievableABCD},\\
(1) $(A,\cdot,C,\cdot)$, $(A,\cdot,\cdot,C)$ is achievable iff \\
\phantom{\quad}(1a)
$
\forall \theta\in\mathbb{R},
\Bigl\{
\begin{array}{@{}ll}
 A(x)^2+C(y)^2\le 1  , &   L\text{ odd}, \\
 A(x)^2+x^2C(y)^2\le 1  ,&  L\text{ even}.
\end{array}
$\\
(2) $(\cdot,B,C,\cdot)$, $(\cdot,B,\cdot,C)$ is achievable iff \\
\phantom{\quad}(2a)
$
\forall \theta\in\mathbb{R},
\Bigl\{
\begin{array}{@{}ll}
 B(x)^2+C(y)^2\le 1  , &   L\text{ odd}, \\
 B(x)^2+x^2C(y)^2\le 1  ,&  L\text{ even}.
\end{array}
$\\
(3)$(A,B,\cdot,\cdot)$ is achievable iff \\
\phantom{\quad}(3a) $\forall \theta\in\mathbb{R},\;A(x)^2+B(x)^2\le 1$, and\\
\phantom{\quad}(3b) $\forall x\ge 1,\;A(x)^2+B(x)^2\ge 1$, and\\
\phantom{\quad}(3c) $\forall L\text{ even},\forall x\ge\tjyoder{0}, A(i x)^2+B(i x)^2\ge 1$.
\\
(4)$(\cdot,\cdot,C,D)$ is achievable iff \\
\phantom{\quad}(4a) $
\forall \theta\in\mathbb{R}, \Bigl\{
\begin{array}{@{}ll}
C(y)^2+D(y)^2\le 1  , &   L\text{ odd}, \\
x^2C(y)^2+x^2D(y)^2\le 1  ,&  L\text{ even},
\end{array}
$ and\\
\phantom{\quad}(4b) $\forall L\text{ odd}, y\ge 1,\;C(y)^2+D(y)^2\ge 1$.
\end{thm}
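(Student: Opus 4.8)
The plan is to handle all four cases by one template, reducing achievability of a $2$-partial tuple to a sum-of-squares factorization that completes it to a full tuple satisfying Thm.~\ref{Thm:achievableABCD}. Conditions (1)--(3) are assumed on the present polynomials throughout, so the content of the reverse direction is to supply the two missing polynomials, with the correct parity and degree at most $L$, such that the unitarity identity Eq.~\ref{Eq:Unitarity} holds. Once such a completion exists, Lem.~\ref{Thm:phifromABCD} manufactures an explicit phase sequence $\vec\phi$ from it, so the completed tuple, and hence the partial tuple, is achievable. The four cases differ only in which pair of polynomials is free and whether the even-$L$ weight $x^2$ is present.

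For the forward direction I would read every inequality required for all real $\theta$ -- (1a), (2a), (3a), (4a) -- directly from Eq.~\ref{Eq:Unitarity} for the presumed completion, by discarding the squared contributions of the two omitted polynomials, which are non-negative for real $x,y\in[-1,1]$. The additional conditions (3b), (3c), (4b) come from analytically continuing the \emph{same} identity to complex $\theta$ and tracking the sign of those omitted squares. The controlling fact is that a polynomial \emph{odd} in its argument, evaluated at an imaginary argument, equals $i$ times a real polynomial and so squares to a non-positive number. The unifying principle is that a boundary condition binds the \emph{given} polynomials precisely when some continuation drives the omitted squared terms negative, forcing the given ones to compensate with a value $\ge1$: for part (4), $L$ odd, continuing to $y\ge1$ (imaginary $x$) sends $A^2+B^2\le0$ and forces $C^2+D^2\ge1$; for part (3) the continuation to $x\ge1$ (imaginary $y$) gives (3b), while the imaginary-$x$ continuation, whose sign is flipped by the even-$L$ weight $x^2$, gives (3c). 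When the two omitted polynomials depend on \emph{different} variables, as in parts (1) and (2), no single continuation renders the discarded terms jointly sign-definite, and when they are \emph{even} in the continued variable (the $L$ even case of part (4)) the induced inequality is automatic -- which explains the absence of boundary conditions in exactly those cases.

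For the reverse direction the problem becomes: write the prescribed non-negative ``leftover'' -- e.g.\ $1-A^2-C^2$ for part (1), odd $L$, or $1-A^2-B^2$ for part (3) -- as $B^2+D^2$, $C^2+D^2$, or its $x^2$-weighted analogue, with the new polynomials carrying the parity fixed by condition (3) of Thm.~\ref{Thm:achievableABCD} and degree at most $L$. I would do this by a Fej\'er--Riesz spectral factorization, the bridge to discrete-time signal processing: pass to the unit-circle variable $z=e^{i\theta/2}$, so that $x=\tfrac12(z+z^{-1})$ and $y=\tfrac1{2i}(z-z^{-1})$ and each of $A,B,C,D$ becomes a Laurent polynomial with prescribed symmetry under $z\mapsto1/z$ and $z\mapsto-z$. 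The leftover is then a non-negative trigonometric polynomial of degree at most $L$; Fej\'er--Riesz factors it as $|h(z)|^2$, and the cosine and sine quadratures of $h$ furnish the two missing polynomials with the correct harmonic parity and degree.

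The crux, and the anticipated main obstacle, is proving that the necessary conditions are also \emph{sufficient} for this factorization to return polynomials of the right parity \emph{and} degree $\le L$, in particular to carry the $x^2$ factor demanded in the even-$L$ cases. The boundary conditions are exactly what close this gap: evaluating (3a) together with (3c) at the branch point $x=0$ (that is, $\theta=\pi$) forces $A(0)^2+B(0)^2=1$, so the even polynomial $1-A^2-B^2$ vanishes there and is divisible by $x^2$ as required, while (3b) and (4b) guarantee that the leftover is non-negative on the entire continued real line and has the conjugate-reciprocal root structure Fej\'er--Riesz needs. Finally, the residual freedom in the spectral factor -- an overall phase and the choice of which root of each conjugate-reciprocal pair to retain -- is used, where not already implied by the hypotheses, to secure condition (2) of Thm.~\ref{Thm:achievableABCD}, completing the construction.
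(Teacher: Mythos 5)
Your forward direction is sound and is essentially the paper's own argument: the (a)-inequalities follow from dropping the non-negative squares of the omitted polynomials in Eq.~\ref{Eq:Unitarity}, and the boundary conditions (3b), (3c), (4b) follow from continuing that same identity to complex $\theta$ (the paper phrases this via the substitutions $x=\sqrt{\lambda},\,y=\sqrt{1-\lambda}$), where a polynomial that is odd in its argument squares to a non-positive number at imaginary arguments. Your explanation of why parts (1), (2) and the even-$L$ case of (4) carry no boundary conditions is also correct, and the overall architecture --- complete the partial tuple by a sum-of-squares factorization, then invoke Lem.~\ref{Thm:phifromABCD} --- is exactly how the paper structures the proof, delegating the reverse direction to Lems.~\ref{Thm:CDfromAB} and~\ref{Thm:BDfromAC}.

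The gap is in the mechanism you propose for the reverse direction. Plain Fej\'er--Riesz uses only non-negativity on the circle and returns a factor $h$ whose quadratures are \emph{generic} trigonometric polynomials in $\theta/2$; what you must produce are two polynomials in $x$ alone or in $y$ alone with prescribed parity, i.e.\ cosine series and/or sine series containing only odd (or only even) harmonics of $\theta/2$. In the variable $z=e^{i\theta/2}$ this means the factor $G$ with $|G|^2$ equal to the leftover must carry a case-specific symmetry: anti-palindromic, $G(1/z)=-G(z)$, when both missing polynomials are sine-type (case (3)); palindromic when both are cosine-type (case (4)); and \emph{real Laurent coefficients} when one missing polynomial depends on $x$ and the other on $y$ (cases (1),(2)). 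A generic Fej\'er--Riesz factor has none of these symmetries, so ``take the cosine and sine quadratures of $h$'' does not deliver polynomials of the required form. One must instead group the roots of the leftover into orbits under conjugation, negation, and inversion, and select a symmetric half from each orbit; the existence of such a selection --- that obstructing real roots on the continuations have even multiplicity, and that the forced factors $y^{|S_0|}$ and $x^2$ appear with the correct parity --- is precisely what conditions (3a)--(3c), (4a)--(4b) purchase, and is the actual technical content of Lem.~\ref{Thm:CDfromAB} (root grouping plus the two-squares identity Eq.~\ref{Eq:TSI} in $\lambda=y^2$) and Lem.~\ref{Thm:BDfromAC} (Weierstrass substitution with the $\langle\text{EN}\rangle$/$\langle\text{OP}\rangle$ bookkeeping, including the counting argument that $|S_i|$ is even so no bad symmetry combination occurs). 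Relatedly, your claim that (3b)/(4b) make the leftover ``non-negative on the entire continued real line'' is backwards: (3b) makes $1-A^2-B^2\le 0$ for $x\ge 1$, and the correct use of this sign-definiteness is to force even multiplicities of real roots there so they can be absorbed into squares, not to satisfy Fej\'er--Riesz's hypothesis. Your observation that (3a) and (3c) together force divisibility by $x^2$ at $\theta=\pi$ for even $L$ is correct and matches the paper's remark; but the hardest case, (1)/(2) --- where the two missing polynomials live in different variables and no boundary conditions are available to help --- is exactly the one your uniform template leaves unresolved.
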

\begin{proof}
In the forward direction, all the (a) conditions are true from Eq.~\ref{Eq:Unitarity} using the fact that $A,B,C,D$ are all real, hence their squares are positive. (3b) is true by considering Eq.~\ref{Eq:Unitarity} with the substitution $x=\sqrt{\lambda}, \;y = \sqrt{1-\lambda}$, and computing $1-A^2(\sqrt{\lambda})-B^2(\sqrt{\lambda})=\cdots$. \ghl{Note that the $x,y$ here are complex}. Using the odd/even symmetry of $C,D$, the RHS factorizes into a positive term times $(1-\lambda)$ or $\lambda(1-\lambda)$. This is negative $\forall \lambda \ge 1$ so $A^2(\sqrt{\lambda})+B^2(\sqrt{\lambda})\ge 1$. (3c) is similarly proven by considering $\lambda\le 0$. The RHS factorizes into $\lambda(1-\lambda)$ and a positive term. (4b) is proven with the substitution $x=\sqrt{1-\lambda}, \;y = \sqrt{\lambda}$ and by considering $\lambda\ge 1$. In the reverse direction, we need to show that assuming these conditions enable the computation of an achievable $(A,B,C,D)$. We leave these steps to Lems.~\ref{Thm:CDfromAB}.~\ref{Thm:BDfromAC}.
\end{proof}
\ghl{Note that $C,D$ are interchangeable in Thm.~\ref{Thm:achievablePartialABCD} as their constraints in Thm.~\ref{Thm:achievableABCD} are identical}. We also characterize all possible $3$-partial tuples.
\begin{thm}[Achievable $3$-partial tuples]
\label{Thm:achievable3PartialABCD}
Assuming $A,B,C,D$ satisfy conditions (1-3) of Thm.~\ref{Thm:achievableABCD}, the following are achievable under their respective conditions\\
(1) $(A,\cdot,\cdot,\cdot)$ iff $\forall \theta\in\mathbb{R}, A(x)^2\le 1$.
\\
(2) $(\cdot,B,\cdot,\cdot)$ iff $\forall \theta\in\mathbb{R}, B(x)^2\le 1$.
\\
(3) $(\cdot,\cdot,C,\cdot)$ iff $\forall \theta\in\mathbb{R}, \Bigl\{
\begin{array}{@{}ll}
C^2(y)\le 1  , &   L\text{ odd}, \\
 x^2C^2(y)\le 1  ,&  L\text{ even}.
\end{array}$
\\
(4) $(\cdot,\cdot,\cdot,D)$ iff $\forall \theta\in\mathbb{R}, \Bigl\{
\begin{array}{@{}ll}
D^2(y)\le 1  , &   L\text{ odd}, \\
 x^2D^2(y)\le 1  ,&  L\text{ even}.
\end{array}$
\end{thm}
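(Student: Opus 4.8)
The plan is to obtain both directions cheaply by invoking the already-proved characterizations, with essentially no new computation. For necessity, observe that if a $3$-partial tuple is achievable then it extends to a full achievable tuple $(A,B,C,D)$, which by Thm.~\ref{Thm:achievableABCD} satisfies the unitarity identity Eq.~\ref{Eq:Unitarity} identically in $\theta$. Since condition~(1) makes $A,B,C,D$ real, every summand on the right-hand side is a nonnegative square; discarding all but one term gives $A(x)^2\le1$ for part~(1), $B(x)^2\le1$ for part~(2), and $C(y)^2\le1$ or $x^2C^2(y)\le1$ (resp. the analogue for $D$) for parts~(3) and~(4), in each parity. This settles necessity uniformly.

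For sufficiency the idea is to collapse each $3$-partial tuple onto an already-solved $2$-partial tuple by filling exactly one empty slot with the zero polynomial, chosen so that the relevant hypothesis of Thm.~\ref{Thm:achievablePartialABCD} reduces \emph{precisely} to the given inequality. For parts~(1) and~(2) I would adjoin $C=0$, forming $(A,\cdot,0,\cdot)$ and $(\cdot,B,0,\cdot)$; condition~(1a), respectively~(2a), of Thm.~\ref{Thm:achievablePartialABCD} then reads $A(x)^2\le1$, respectively $B(x)^2\le1$, in both parities, matching the hypothesis exactly. For parts~(3) and~(4) I would instead adjoin $B=0$, forming $(\cdot,0,C,\cdot)$ and $(\cdot,0,\cdot,D)$; condition~(2a) then reads $C(y)^2\le1$ (resp. $x^2C^2(y)\le1$), and likewise for $D$ using the $C\leftrightarrow D$ interchangeability noted after Thm.~\ref{Thm:achievablePartialABCD}. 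In every case Thm.~\ref{Thm:achievablePartialABCD} returns a genuine full achievable tuple, which simultaneously witnesses achievability of the original $3$-partial tuple. One checks trivially that the zero polynomial is real and has the required parity (it is both even and odd), so conditions~(1) and~(3) survive the zero-fill; and since the kept slots' boundary values are untouched, the assumed condition~(2) is inherited (and when $B=0$ it holds outright, as $B(1)=0$).

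The crux, and the one place demanding care, is \emph{which} slot to zero out, because the $2$-partial results are asymmetric. Routing part~(1) instead through $(A,0,\cdot,\cdot)$ and Thm.~\ref{Thm:achievablePartialABCD}(3) would drag in the extra lower bounds~(3b)--(3c), which $A(x)^2\le1$ does not imply, and routing parts~(3)--(4) through a zeroed $A$ would force $A(1)=0$, contradicting condition~(2) of Thm.~\ref{Thm:achievableABCD} (equivalently $\hat U(0)=\hat{\mathbb{1}}$) for any unitary completion. Choosing $C=0$ for the $A$- and $B$-slots and $B=0$ for the $C$- and $D$-slots is exactly what steers through both hazards and makes each hypothesis degenerate to the single claimed inequality; I would close by noting this zero-filling is simply the degenerate special case of the $2$-partial constructions, so it inherits their validity with no extra machinery.
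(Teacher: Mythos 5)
Your proposal is correct and follows essentially the same route as the paper: necessity by dropping nonnegative squares from the unitarity identity (Eq.~\ref{Eq:Unitarity}), and sufficiency by filling one empty slot with the zero polynomial so as to reduce to cases (1) and (2) of Thm.~\ref{Thm:achievablePartialABCD}. Your added discussion of \emph{which} slot to zero (avoiding case (3) with its lower bounds, and avoiding $A=0$ which would clash with $A(1)=1$ under unitarity) only makes explicit a choice the paper leaves implicit in its one-line argument.
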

\begin{proof}
The forward direction follows by definition and from Eq.~\ref{Eq:Unitarity} where $A,B,C,D$ are all real $\forall \theta\in\mathbb{R}$, hence their squares are positive. The reverse direction is true from setting the unspecified polynomial to $0$ in one of the $2$-partial tuples (1), \tjyoder{(2)}  in Thm.~\ref{Thm:achievablePartialABCD}.
\end{proof}

These simple characterizations show how one can in principle encode almost any arbitrary desired function into quadratures of $\hat{U}(\theta)$. Consider $(A,\cdot,\cdot,\cdot)$, which aside from symmetry and $A(1)=1$, only needs to satisfy $\forall |x|\le 1,\;A^2(x)\le 1$. The famous Stone-Weierstrass theorem~\cite{Stone1948} assures us that $A(x)$ of sufficiently large degree $L$ can approximate arbitrarily well any \emph{arbitrary} continuous real function that satisfies these constrains on the interval $|x|\le 1$. This ability to create almost arbitrary quantum response functions helps explain the applicability of composite gates to many diverse problems.

\subsection{Implementation of quantum response functions}
\label{Sec:Implementation}
Unleashing the potential of arbitrarily sophisticated choices of achievable $(A,B,C,D)$ requires an efficient computation of their implementation $\vec \phi$. It is clear that the a random search is wholly inadequate as the degree of $L$ could be very large. Nevertheless, achievability leads to a certain structure that resolves this problem (\textbf{P2}). This 
is encapsulated in the following lemma, which is proven constructively and furnishes the reverse direction proof of Thm.~\ref{Thm:achievableABCD}.
\begin{lemma}[Optimal quantum response compilation]
\label{Thm:phifromABCD}
Exactly $L$ phases $\vec \phi\in \mathbb{R}^L$ are required to implement an achievable $(A,B,C,D)$ of degree at most $L$, and these $L$ phases can be computed in time $\mathcal{O}(\text{poly}(L))$.
\end{lemma}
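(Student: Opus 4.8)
The plan is to prove the lemma \emph{constructively}, by a layer-stripping recursion that removes one primitive rotation at a time---starting from the outermost $\hat R_{\phi_L}$---with each removal lowering the polynomial degree by exactly one. Because this procedure takes any tuple obeying conditions (1--4) of Thm.~\ref{Thm:achievableABCD} and returns an explicit real $\vec\phi$ realizing it, it simultaneously discharges the reverse direction of that theorem.

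First I would expose the degree structure by factoring $x^L$ out of Eq.~\ref{U_Expansion}. With $w\equiv\tan(\theta/2)=y/x$ each primitive becomes $\hat R_{\phi}(\theta)/x=\hat{\mathbb 1}-iw\hat\sigma_{\phi}$, so that
\[
\frac{\hat U(\theta)}{x^{L}}=\prod_{k=L}^{1}\left(\hat{\mathbb 1}-iw\,\hat\sigma_{\phi_k}\right)=\begin{pmatrix}p(w)&q(w)\\-q^{*}(w)&p^{*}(w)\end{pmatrix},
\]
a matrix polynomial of degree $L$ in the single variable $w$ whose entries $p,q$ (with $p^{*},q^{*}$ denoting coefficient conjugation) are fixed by $(A,B,C,D)$ through the identity $x^{2}+y^{2}=1$. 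Each factor has this $SU(2)$-like form and the form is closed under multiplication, which is what keeps the recursion inside the achievable class.

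Second, I would determine $\phi_L$ by demanding that $\hat R_{\phi_L}^{\dagger}(\theta)\hat U(\theta)$ have degree at most $L-1$; equivalently, $\left(\hat{\mathbb 1}+iw\hat\sigma_{\phi_L}\right)\hat U/x^{L}$ must be divisible by $1+w^{2}$, i.e.\ vanish at $w=\pm i$. The $(1,1)$ entry gives the single condition $p(i)+e^{-i\phi_L}\overline{q(-i)}=0$, fixing $e^{-i\phi_L}=-p(i)/\overline{q(-i)}$ from the top-order data of $(A,B,C,D)$. I would then form the downdated degree-$(L-1)$ tuple $(A',B',C',D')$ by dividing out $1+w^{2}$ (the inverse of the recurrence in Eq.~\ref{Eq:PhaseSumRecurrance}), verify it again satisfies conditions (1--4), and induct down to the degree-$0$ tuple $\hat{\mathbb 1}$. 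Counting is then immediate: a composite of $m$ rotations has degree at most $m$, so a genuine degree-$L$ tuple needs at least $L$ phases and the construction supplies exactly $L$; each peel costs $\mathcal{O}(L)$ arithmetic on $\mathcal{O}(L)$ coefficients, for a total runtime $\mathcal{O}(\text{poly}(L))$.

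The main obstacle is the inductive invariant: showing that the downdated $(A',B',C',D')$ still obeys \emph{all} of conditions (1--4), and in particular that the extracted $\phi_L$ is real. Reality of $\phi_L$ amounts to $|e^{-i\phi_L}|=1$, i.e.\ $|p(i)|=|q(-i)|$, and this is exactly where the otherwise auxiliary boundary inequalities of Thms.~\ref{Thm:achievableABCD}--\ref{Thm:achievablePartialABCD} enter---the conditions forcing $A^{2}+B^{2}\ge 1$ (or $C^{2}+D^{2}\ge 1$) outside $|x|\le 1$ and their complex continuations, which are precisely the statements about $A,B,C,D$ evaluated at the imaginary arguments (such as the $A(ix)$ of condition (3c)) corresponding to $w=\pm i$. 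I expect the bookkeeping-heavy parts to be tracking the odd/even parity flip between consecutive $L$, and confirming that divisibility by $1+w^{2}$ holds for \emph{all} matrix entries at once, which should follow from the preserved $SU(2)$ structure $(p,q)\mapsto(p^{*},q^{*})$ but requires care.
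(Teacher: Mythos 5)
Your overall strategy is the same as the paper's: a constructive peel-off recursion that strips $\hat R_{\phi_L}$, chooses $\phi_L$ to force the degree to drop by one, recurses down to the identity, and gets the lower bound from the fact that each primitive rotation raises the degree by at most one. Your reformulation in $w=\tan(\theta/2)$ is an equivalent repackaging of the paper's phase-sum bookkeeping: writing $\hat U/x^L=\sum_j(-iw)^j\hat\Phi_{L,j}$, evaluation at $w=\pm i$ produces exactly the alternating even/odd sums $\sum_k\Phi_{L,k}$ appearing in Eq.~\ref{Eq:choosephi}, so your divisibility-by-$(1+w^2)$ condition and the paper's condition $\Phi_{L-1,L}=0$ coincide, and your formula $e^{-i\phi_L}=-p(i)/\overline{q(-i)}$ reproduces Eq.~\ref{Eq:choosephi} (both reduce to a ratio of leading coefficients, since at $w=\pm i$, i.e.\ at $x^2+y^2=0$, only the top-degree coefficients of $A,B,C,D$ survive).

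The one genuine flaw is where you source the reality of $\phi_L$. You appeal to the inequality conditions (3b), (3c) of Thm.~\ref{Thm:achievablePartialABCD}. Those cannot do the job: they are conditions on \emph{partial} tuples (they belong to Lems.~\ref{Thm:CDfromAB}, \ref{Thm:BDfromAC}, not to this lemma, whose hypothesis is a full achievable tuple), and, being one-sided inequalities, they can never yield the exact equality $|p(i)|=|\overline{q(-i)}|$ that $|e^{-i\phi_L}|=1$ requires. The correct source --- the one the paper uses --- is the unitarity \emph{identity}, condition (4) of Thm.~\ref{Thm:achievableABCD}. In your notation this is the polynomial identity $p(w)p^*(w)+q(w)q^*(w)=(1+w^2)^L$; evaluating it at $w=i$ annihilates the right-hand side and, using that $p$ is even and $q$ odd in $w$, gives $|p(i)|^2=|q(-i)|^2$, which is precisely the paper's observation that the trailing coefficient $\bigl((a_L^2+b_L^2)-(c_{2\lceil L/2\rceil-1}^2+d_{2\lceil L/2\rceil-1}^2)\bigr)$ of Eq.~\ref{Eq:Unitarity} vanishes. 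Two smaller omissions: your induction must also close the terminal case --- after $L$ peels the residual degree-$0$ factor has the form $A_0\hat{\mathbb 1}+iB_0\hat\sigma_z$, and it equals $\hat{\mathbb 1}$ only by virtue of condition (2), $A(1)=1$, which you never invoke --- and note that once $\phi_L$ is known to be real, unitarity of the downdated factor $\hat R^\dagger_{\phi_L}\hat U$ is automatic (a product of unitaries), so the inductive invariant you flag as the main obstacle is in fact the easy part.
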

\begin{proof}
A minimum of $L$ phases $\vec{\phi}$ are required to implement a given $(A,B,C,D)$ of degree at most $L$ as each application of $\hat R_{\phi_k}(\theta)$ only increases the degree of $(A,B,C,D)$ by one. We now show that $(A,B,C,D)$ can be implemented with at most $L$ phases $\vec{\phi}$. Due to the even/odd symmetry of real $A,B,C,D$ from Thm.~\ref{Thm:achievableABCD} conditions (1) and (3), we can compute its unique phase sum representation in Eq.~\ref{U_Expansion} via the invertible transformation
\begin{align}
\label{Eq:PhiFromPoly}
\Phi_{L,j} =  i^{j}\sum_{n=0}^L
\begin{cases} 
(i c_n-d_n) \binom {\lfloor(L-n)/2\rfloor} {(j-n)/2}, &j\, \text{odd}, \\ 
(a_n+ib_n) \binom {(L-n)/2} {j/2}, &j\,\text{even}.
\end{cases}
\end{align}
Let us take the ansatz $\hat{U}(\theta)=\hat{R}_{\phi_L}(\theta)\hat{V}(\theta)$ where $\hat{V}(\theta)$ is unitary and $\hat{V}(0)=\mathbb 1$ as $(A,B,C,D)$ represents a unitary from Thm.~\ref{Thm:achievableABCD} condition (4). Thus $\hat{V}(\theta)$ also has a phase sum representation $\Phi_{L-1,j}$. These two phase sums are related by the linear map of Eq.~\ref{Eq:PhaseSumRecurrance}, with inverse
\begin{align}
\label{Eq:PhiInverse}
\Phi_{L-1,j}=& \sum_{k=0}^j\Phi_{L,k}
\begin{cases}
\textstyle -e^{-(-1)^{j}i\phi_{L}},& j+k\text{ odd}, \\
\textstyle 1,& j+k\text{ even}.
\end{cases}
\end{align}
By choosing
\begin{align}
\label{Eq:choosephi}
e^{i\phi_L} = \frac{\sum_{k=1\text{ odd}}^L\Phi_{L,k}}{\sum_{k=0\text{ even}}^L\Phi_{L,k}}=\frac{c_{2\lceil L/2\rceil-1}+id_{2\lceil L/2\rceil-1}}{(-1)^{\lceil L/2\rceil}(a_L+ib_L)},
\end{align}
we satisfy the necessary condition $\Phi_{L-1,L}=0$ from Eq.~\ref{Eq:PhaseSumRecurrance}. In particular, $\phi_L$ is real, as Eq.~\ref{Eq:Unitarity} has the trailing term $((a_L^2+b_L^2)-(c_{2\lceil L/2\rceil-1}^2+d_{2\lceil L/2\rceil-1}^2))\sin^{2L}{\left(\theta/2\right)}=0$. Hence the RHS of Eq.~\ref{Eq:choosephi} has absolute value $1$. By recursively reducing the degree of $\hat{V}(\theta)$, we obtain all $L$ phases $\vec{\phi}$. The terminal case at $L=1$ must be consistent with Eq.~\ref{Eq:PhaseSumRecurrance} where $\Phi_{0,0}=1$. When evaluated with Eq.~\ref{Eq:PhiFromPoly},~\ref{Eq:PhiInverse}, this is satisfied only if $A(1)=1$ (Thm.~\ref{Thm:achievableABCD} condition (2)), which is true for achievable $(A,B,C,D)$. 
All steps in this procedure can be computed in time $\mathcal{O}(\text{poly}{(L)})$, and there are only $L$ recursions, leading to a runtime of $\mathcal{O}(\text{poly}{(L)})$. 
\end{proof}

\subsection{Computation of quantum response functions}
\label{Sec:Computation}
A consequence of Lem.~\ref{Thm:phifromABCD} is that designing a composite gate is no more difficult than finding the $(A,B,C,D)$ to describe the quantum response function $\hat{U}(\theta)$. Optimizing $(A,B,C,D)$ for some objective function is far more intuitive than the prior art of a random search over $\vec\phi$. However, this still is a difficult problem. The unitary constraint Eq.~\ref{Eq:Unitarity} represents a system of quadratic multinomial equations that would have to be solved at each step of the optimization to obtain an achievable  $(A,B,C,D)$. Solving such systems is in general an NP-complete task. This is the essence of problem (\textbf{P3}): it would be much easier to optimize a subset of $(A,B,C,D)$, and doing so is often the problem of practical interest anyway. 

This subset optimization is illustrated by the response functions $F_{\chi}(\theta)$, $p(\theta)$ of
Eq.~\ref{Eq:F_Polynomial},\ref{Eq:P_Polynomial} which depend on only two polynomials. Optimizing just these for some objective function offers more freedom as the unitary constraint Eq.~\ref{Eq:Unitarity} is weakened to that of Thm.~\ref{Thm:achievablePartialABCD}. Ultimately, we must compute some achievable $(A,B,C,D)$ from a partial specification in order to find the phases $\vec\phi$. 

Fortunately, the structure of achievable partial tuples can be exploited to derive algorithms \ghl{analogous to prior art~\cite{Shinnar1989} based on polynomial sum-of-squares problems~\cite{Marshall2008}, but specialized to the symmetries of Thm.~\ref{Thm:achievablePartialABCD}}. We present results for $(A,B\cdot,\cdot)$, $(A,\cdot,C,\cdot)$ of odd degree and show how they apply to all achievable $2$-partial tuple. \ghl{As these primarily serve to show that the necessary conditions in Thms.~\ref{Thm:achievableABCD},~\ref{Thm:achievablePartialABCD},~\ref{Thm:achievable3PartialABCD} are also sufficient, the details of the proofs for Lems.~\ref{Thm:CDfromAB},~\ref{Thm:BDfromAC}, which also furnish constructive algorithms for computing $(A,B,C,D)$ from partial tuples, may be skipped by the casual reader.}
\begin{lemma}[Transition probability sum-of-squares]
\label{Thm:CDfromAB}
$\forall$ $2$-partial tuples $(A,B,\cdot,\cdot)$ of odd degree at most $L$ that satisfy conditions \tjyoder{(1-3) of Thm.~\ref{Thm:achievableABCD}} and (3a, 3b) of Thm.~\ref{Thm:achievablePartialABCD}, 
$\exists$ achievable $(A,B,C,D)$ of degree at most $L$ that can be computed in time $\text{poly}(L)$.
\end{lemma}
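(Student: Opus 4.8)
The plan is to reduce the construction of $(C,D)$ to a classical two-square factorization and to isolate the single positivity fact that makes that factorization possible. Since $L$ is odd, achievability (condition (4) of Thm.~\ref{Thm:achievableABCD}) demands the polynomial identity $C(y)^2+D(y)^2=p(\theta)=1-A(x)^2-B(x)^2$ with $C,D$ odd. First I would record that, because $A,B$ are odd, $A^2+B^2$ is even in $x$ and vanishes at $x=0$; hence $p$ is a polynomial in $t\equiv y^2$ that vanishes at $t=0$. Writing $p=G(t)$, $G(t)=t\,H(t)$, and writing the sought odd quadratures as $C(y)=y\,\tilde C(y^2)$ and $D(y)=y\,\tilde D(y^2)$, the whole task collapses to producing a two-square decomposition $H(t)=\tilde C(t)^2+\tilde D(t)^2$ of a single real polynomial $H$ of degree at most $L-1$ in $t$.

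The crux is to prove that $H(t)\ge 0$ for \emph{all} real $t$, since a real polynomial is a sum of two real squares iff it is globally nonnegative. Here I would pass to $u\equiv x^2$: conditions (3a),(3b) evaluated at $x=1$ squeeze $A(1)^2+B(1)^2=1$, so $R(x)\equiv 1-A^2-B^2$ vanishes at $u=1$ and factors as $R=(u-1)\psi(u)$ with $\deg_u\psi\le L-1$; since $H(t)=-\psi(1-t)$, it suffices to show $\psi(u)\le 0$ for every real $u$. For $u\in[0,1]$ condition (3a) gives $R\ge 0$ while $u-1\le 0$, and for $u\ge 1$ condition (3b) gives $R\le 0$ while $u-1\ge 0$; both force $\psi(u)\le 0$. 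The remaining range $u<0$ is the main obstacle, and it is settled by the oddness of $A,B$: setting $x=i\xi$ with $\xi$ real makes $A(i\xi)=i\,a(\xi)$ and $B(i\xi)=i\,b(\xi)$ for real polynomials $a,b$, whence $A(i\xi)^2+B(i\xi)^2=-(a^2+b^2)\le 0$ and $R\ge 1>0$ at $u=-\xi^2$; since $u-1<0$ this again gives $\psi(u)<0$. This evaluation at imaginary argument is precisely the odd-$L$ counterpart of condition (3c), and it is automatic here rather than an extra hypothesis.

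With $H\ge 0$ on $\mathbb{R}$ in hand, I would invoke the classical two-square (spectral) factorization: locate the roots of $H$, whose real roots necessarily have even multiplicity, form the spectral factor $P(t)=\sqrt{c}\,\prod_j (t-r_j)^{m_j}\prod_k\big((t-\alpha_k)+i\beta_k\big)^{n_k}$ by taking the real roots at half multiplicity and one representative from each conjugate pair $\alpha_k\pm i\beta_k$ (with $c>0$ the leading coefficient), and set $\tilde C=\Rep[P]$, $\tilde D=\Imp[P]$ so that $H=|P|^2=\tilde C^2+\tilde D^2$ for real $t$. Because $\deg_t H\le L-1$ is even, $\tilde C,\tilde D$ have degree at most $(L-1)/2$, so $C(y)=y\,\tilde C(y^2)$ and $D(y)=y\,\tilde D(y^2)$ are odd of degree at most $L$. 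By construction $(A,B,C,D)$ is real, has the correct odd parity, satisfies $A^2+B^2+C^2+D^2=1$, and inherits $A(1)=1$ or $B(1)=0$ from the hypothesis, so it meets conditions (1)--(4) of Thm.~\ref{Thm:achievableABCD} and is achievable.

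Every step is polynomial in $L$: forming $A^2+B^2$ and dividing out $(u-1)$ is polynomial arithmetic, while numerically locating the $\mathcal{O}(L)$ roots of $H$ and multiplying the factors back together costs $\mathrm{poly}(L)$, giving the claimed runtime. I expect the only genuine difficulty to be the sign analysis of $\psi$ on $u<0$; once the imaginary-argument identity is recognized, the rest is bookkeeping, and the same template---factor out the vanishing boundary, reduce to a globally nonnegative $H$, then spectral-factorize---should carry over to the companion $(A,\cdot,C,\cdot)$ case treated in Lem.~\ref{Thm:BDfromAC}.
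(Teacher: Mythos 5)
Your proposal is correct and follows essentially the same route as the paper's proof: the same polynomial $p=1-A^2-B^2$ viewed as a polynomial in $y^2$, the same three-region sign analysis (condition (3a) on $[0,1]$, condition (3b) for $x\ge 1$, and the oddness-of-$A,B$ imaginary-argument evaluation, which the paper phrases as $f(\lambda)\ge 1$ for $\lambda\ge 1$), and the same root-pairing two-square construction, which the paper carries out by iterating the two-squares identity rather than by taking real and imaginary parts of your spectral factor $P$. One local slip worth fixing: in your first paragraph, $p$ vanishing at $t=0$ does not follow from $A^2+B^2$ vanishing at $x=0$ (that point is $t=1$); the needed fact is $A(1)^2+B(1)^2=1$, which you correctly re-derive from (3a) and (3b) in your second paragraph, so the argument as a whole stands.
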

\begin{proof}
Consider the polynomial of degree at most $L$
\begin{align}
f(\lambda)=1-A^2(\sqrt{1-\lambda})-B^2(\sqrt{1-\lambda}),\quad \ghl{\lambda\in \mathbb{R}},
\end{align}
with roots \ghl{$S=\{s\;|\;f(s)=0\}\in\mathbb{C}^{L}$ \tjyoder{($S$ contains duplicates if a root is degenerate). Since $A$, $B$ are odd polynomials, $f(\lambda)$ is real for all real $\lambda$.} Because $f(\lambda)$ is real, complex roots $s,s^{*}$ occur in pairs. Thus we can group subsets of $S$ without loss of \tjyoder{generality} as:}
\begin{align}
S_0&=\{s\in S\;|\;s=0\},\quad S_{\text{c}}=\{s\in S\;|\;\Imp[s]>0\}, \\ \nonumber
S_{\text{r}}&=\{s\in S\;|\;\Rep[s]\neq0\land\Imp[s]=0\}.
\end{align}
\ghl{Observe that $S_{0,r}$ are real, and $S_c$ is complex}. Thus
\begin{align}
f(\lambda)=\resizebox{.73\columnwidth}{!}{%
$\displaystyle K^2 \lambda^{|S_0|}\prod_{s\in S_{\text{r}}}(\lambda-s)\prod_{s\in S_{\text{c}}}\left((\lambda-\Rep[s])^2+\Imp[s]^2\right)$
},
\end{align}
with scale constant $K\in\mathbb{R}$.
Using (3b), $f(\lambda)\le 0,\; \forall \lambda\le 0$. Hence, all \tjyoder{negative} roots in $S_{\text{r}}$ occur with even multiplicity. Using (3a), $f(\lambda)\in [0,1],\; \forall \lambda\in[0,1]$. As $f(\lambda)$ changes sign at $\lambda=0$, $|S_0|$ is odd. Using the oddness of $A,B$, $f(\lambda)\ge 1,\; \forall \lambda\ge 1$. Since $ f(\lambda)\ge0,\;\forall\lambda\ge0$, all \tjyoder{positive} roots in $S_{\text{r}}$ occur with even multiplicity. Thus, all real roots excluding $s=0$ occur with even multiplicity. By repeated application of the two-squares identity
\begin{align}
\label{Eq:TSI}
(r^2+s^2)(t^2+u^2)=(r t \pm s u)^2+(r u \mp s t)^2,
\end{align}
the complex factors \tjyoder{can be simplified like}
\begin{align}
\prod_{s\in S_{\text{c}}}\left((\lambda-\Rep[s])^2+\Imp[s]^2\right)=g^2(\lambda)+h^2(\lambda),
\end{align}
where $g,h$ are real polynomials in $\lambda$. Thus $f(\lambda)=C^2(\sqrt{\lambda})+D^2(\sqrt{\lambda})$ where
\begin{align}
\label{Eq:CDfromAB}
\begin{Bmatrix}
C(y) \\ D(y)
\end{Bmatrix}
=
\left(K y^{|S_0|}\prod_{s\in S_{\text{r}}}(y^2-s)^{\frac{1}{2}}\right)
\begin{Bmatrix}
g(y^2) \\ h(y^2)
\end{Bmatrix},
\end{align}
and $C,D$ are odd real polynomials of degree at most $L$. Note that different choices of signs Eq.~\ref{Eq:TSI} generates a finite number of different valid solutions. Computing the roots of $f(\lambda)$ is the most difficult step of this algorithm, but can be done in time $\mathcal{O}(\text{poly}(L))$~\cite{Neff1996}.
\end{proof}
The proof for even $L$, and tuples $(\cdot,\cdot,C,D)$ carries through with minor modification. The stated conditions in Thm.~\ref{Thm:achievablePartialABCD} guarantee that the various factors of $\lambda$, $(1-\lambda)$ necessary for the correct symmetry of the unspecified polynomials occur with the right multiplicity, and that all other real roots occur with even multiplicity. Some additional processing for the $(\cdot,\cdot,C,D)$ case is required as the output $(A,B,C,D)$ is not guaranteed to satisfy $A(1)=1$. However, $A(1)^2+B(1)^2=1$ is still true so by computing $\gamma=\text{Arg}[A(1)+iB(1)]$, we can form an achievable $(A \cos{\gamma}+B \sin{\gamma},B \cos{\gamma}-A \sin{\gamma},C,D)$.

We now present the analogous algorithm for $(A,\cdot,C,\cdot)$.
\begin{lemma}[Fidelity response sum-of-squares]
\label{Thm:BDfromAC}
$\forall$ $2$-partial tuples $(A,\cdot,C,\cdot)$ of odd degree at most $L$ that satisfy condition\tjyoder{s (1-3) of Thm.~\ref{Thm:achievableABCD} and} (1a) of Thm.~\ref{Thm:achievablePartialABCD}, $\exists$ achievable $(A,B,C,D)$ of degree at most $L$ that can be computed in time $\text{poly}(L)$.
\end{lemma}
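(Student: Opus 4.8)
The plan is to follow the sum-of-squares strategy of Lem.~\ref{Thm:CDfromAB}, but adapted to the feature that here the two unknown quadratures live in different variables. With $A$ and $C$ fixed, the only condition of Thm.~\ref{Thm:achievableABCD} still to be enforced is unitarity Eq.~\ref{Eq:Unitarity}, which for odd $L$ asks us to choose $B,D$ with $B(x)^2+D(y)^2=g$, where $g:=1-A(x)^2-C(y)^2$. First I would collapse everything to the single variable $\lambda=\sin^2(\theta/2)$, so that $x^2=1-\lambda$, $y^2=\lambda$, and $g$ is a real polynomial of degree at most $L$ in $\lambda$; condition (1a) of Thm.~\ref{Thm:achievablePartialABCD} states exactly that $g(\lambda)\ge 0$ on $[0,1]$. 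Because $B,D$ must be odd, we have $B(x)^2=(1-\lambda)p(\lambda)^2$ and $D(y)^2=\lambda\,q(\lambda)^2$ for real polynomials $p,q$, so the whole problem becomes the weighted two-squares decomposition
\[
g(\lambda)=(1-\lambda)\,p(\lambda)^2+\lambda\,q(\lambda)^2,
\]
after which $B$ and $D$ are read off directly. This is the analogue for mixed tuples of the pure sum-of-squares $f=C^2+D^2$ met in Lem.~\ref{Thm:CDfromAB}.

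The key step is to produce this weighted decomposition by spectral factorization, the signal-processing counterpart of the two-squares identity Eq.~\ref{Eq:TSI}. Substituting $\lambda=(1-\cos\theta)/2$ turns $g$ into a real, even, nonnegative trigonometric polynomial $G(\cos\theta)$ of degree at most $L$. By the (real) Fej\'er--Riesz factorization~\cite{Marshall2008,Oppenheim2010} one writes $G(\cos\theta)=|\rho(e^{i\theta})|^2$, where $\rho$ has degree at most $L$ and may be chosen with real coefficients: concretely, forming the reciprocal Laurent polynomial of $g$ and assigning its roots to $\rho$ by the usual conjugate/reciprocal pairing plays the same role here that pairing complex roots via Eq.~\ref{Eq:TSI} played in Lem.~\ref{Thm:CDfromAB}. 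Passing to the half-angle $z=e^{i\theta/2}$, the object $z^{-L}\rho(z^2)$ is a Laurent polynomial with only odd powers of $z$; separating its symmetric and antisymmetric parts yields precisely $B(\cos(\theta/2))$ and $D(\sin(\theta/2))$, which are real odd polynomials of degree at most $L$ obeying $B^2+D^2=g$ by construction. Thus $(A,B,C,D)$ satisfies Eq.~\ref{Eq:Unitarity}.

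It remains to check the remaining achievability conditions and the runtime. Reality and the odd parity (conditions (1) and (3)) hold by construction; condition (2) via $A(1)=1$ forces $g(0)=0$ and hence $B(1)=0$, and should an admissible $(A,C)$ satisfy only $A(1)^2+B(1)^2=1$ after the fact, the same rotation used for the $(\cdot,\cdot,C,D)$ case of Lem.~\ref{Thm:CDfromAB} restores it. Every step---forming $g$, the factorization, and reading off coefficients---costs $\mathcal{O}(\mathrm{poly}(L))$, dominated by root finding~\cite{Neff1996}. I expect the main obstacle to be exactly the existence and realness of the spectral factor: one must show that $g\ge0$ on $[0,1]$ (condition (1a)) is sufficient for a factorization whose factor is simultaneously real and carries only the odd half-angle frequencies, so that $B,D$ emerge as genuine real odd polynomials of degree at most $L$ rather than complex or wrong-degree objects; this requires carefully tracking the multiplicities at the endpoints $\lambda=0,1$ (the factors of $\lambda$ and $1-\lambda$). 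The even-$L$ case and the other $2$-partial tuples then follow by the minor modifications already indicated after Lem.~\ref{Thm:CDfromAB}, using the interchangeability of $C$ and $D$.
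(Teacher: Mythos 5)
Your construction is correct, and it reaches the result by a genuinely different route than the paper. The paper's proof changes variables to $t=\tan(\theta/4)$ (the Weierstrass substitution), forms the palindromic polynomial $\tilde f(t)=(1+t^2)^{2L}-\tilde A^2(t)-\tilde C^2(t)$, sorts its roots into seven classes closed under $s\mapsto s^{*},-s,s^{-1}$, assembles from each class explicit factors of definite even/odd (anti)palindromic symmetry, merges them with the two-squares identity Eq.~\ref{Eq:TSI}, and then needs a root-counting argument to certify that the factors can be combined with the $\langle\text{EN}\rangle$ and $\langle\text{OP}\rangle$ symmetries demanded of $\tilde B,\tilde D$, before converting back to $B(x),D(y)$ via the binomial formulas of Eq.~\ref{Eq:computeBD}. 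You instead reduce to the single variable $\lambda=\sin^2(\theta/2)$, observe that condition (1a) makes $G(\cos\theta)=1-A^2-C^2$ nonnegative on the \emph{entire} circle (because $(1-\cos\theta)/2\in[0,1]$ for all $\theta$), apply real-coefficient Fej\'er--Riesz factorization $G=|\rho(e^{i\theta})|^2$ with $\deg\rho\le L$, and exploit the half-angle structure: since $L$ is odd, $e^{-iL\theta/2}\rho(e^{i\theta})$ carries only odd frequencies $(2k-L)\theta/2$, so its real part $\sum_k a_k\cos((2k-L)\theta/2)$ and imaginary part $\sum_k a_k\sin((2k-L)\theta/2)$ are, by the parity of Chebyshev polynomials, automatically real \emph{odd} polynomials $B(x)$ and $D(y)$ of degree at most $L$ with $B^2+D^2=1-A^2-C^2$, i.e.\ Eq.~\ref{Eq:Unitarity}. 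The two arguments are cousins---Fej\'er--Riesz is itself proved by conjugate/reciprocal root pairing, the unit-circle analogue of the paper's use of Eq.~\ref{Eq:TSI}---but your route buys a real simplification: the parity and degree constraints on $B,D$ fall out automatically from the odd-frequency structure, so the paper's root taxonomy, symmetry-group bookkeeping, and final counting argument all disappear; it also sits closer to standard discrete-time-signal-processing spectral factorization, in keeping with the paper's theme. What the paper's version buys is explicitness: closed-form factors for each root class, explicit coefficient conversion, and an enumeration of the finitely many valid solutions through the sign choices in Eq.~\ref{Eq:TSI}. Both are $\mathcal{O}(\text{poly}(L))$, dominated by root finding. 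One small correction: your fallback rotation of $(A,B)$ when $A(1)\neq 1$ is neither available here (it would alter the prescribed $A$, unlike in the $(\cdot,\cdot,C,D)$ case of Lem.~\ref{Thm:CDfromAB}) nor needed---when $A(1)=1$, the natural reading of condition (2) of Thm.~\ref{Thm:achievableABCD} for a tuple with unspecified $B$, condition (2) holds outright and your construction automatically yields $B(1)^2=1-A(1)^2=0$.
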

\begin{proof}
With the Weierstrass substitution \ghl{$\forall  t\in\mathbb{R},\;x=(1-t^2)/(1+t^2), y=2t/(1+t^2)$}, define the real polynomials
\begin{align}
(\tilde A(t),\tilde B(t),\tilde C(t),\tilde D(t))= (1+t^2)^L (A,B,C,D).
\end{align}
\ghl{These polynomials have extremely useful symmetries which we indicate with angled brackets $\langle\cdot\rangle$. $\langle\tilde A\rangle=\langle \tilde B\rangle=\langle E\tjyoder{N}\rangle$ are Even $\langle\text{E}\rangle$ aNtipalindromes $\langle\text{N}\rangle$ while $\langle\tilde C\rangle=\langle \tilde D\rangle=\langle OP\rangle$ are Odd $\langle\text{O}\rangle$ Palindromes $\langle\text{P}\rangle$}. Antipalindromes satisfy $\tilde A(t)=-t^{2L} \tilde A(t^{-1})$ whereas palindromes satisfy $\tilde C(t)=t^{2L} \tilde C(t^{-1})$. Note that $\langle\text{E}\rangle$,$\langle\text{O}\rangle$ and $\langle\text{P}\rangle$,$\langle\text{N}\rangle$ polynomials with multiplication form a group isomorphic to $Z_2 \times Z_2$. For example, $\langle\text{EN}\rangle\langle\text{OP}\rangle=\langle\text{ON}\rangle$. 

Consider the positive, \tjyoder{palindromic} polynomial
\begin{align}
\label{Eq:BDfromACpositivepolynomial}
\tilde f(t)&
\resizebox{.8\columnwidth}{!}{$\displaystyle=(1+t^2)^{2L}-\tilde A^2(t)-\tilde  C^2(t)=K^2\prod^{}_{s\in S}(t-s)$},
\end{align}
with scale constant $K\in\mathbb{R}$, and roots \ghl{$S=\{s\;|\:\tilde f(s)=0\}\in\mathbb{C}^{4L-|S_0|}$}, where $|S_0|$ is the multiplicity of the zero roots. \tjyoder{Note the degree of $\tilde f(t)$ is $4L-|S_0|$, not $4L$, because the first $|S_0|$ coefficients being zero implies the last $|S_0|$ are as well.} Due to the $\langle EP\rangle$ symmetry of $\tilde f(t)$, $\forall$ roots $s\neq0$, $\exists$ roots $s^{*}$,$-s$, and $s^{-1}$. Thus we group subsets of these roots without any loss of information as follows:
\begin{align}
S_0 &= \{s\in S\;|\; s = 0\}, \quad S_1 = \{s\in S\;|\; s = 1\},\\ \nonumber
S_r &= \{s\in S\;|\; \Rep[s] > 1 \land \Imp[s]=0\},\\ \nonumber 
S_i &= \{s\in S\;|\; \Rep[s] = 0 \land \Imp[s]=1\},\\ \nonumber 
S_\iota &= \{s\in S\;|\;\Rep[s] = 0 \land \Imp[s]>1\},\\ \nonumber 
S_u &= \{s\in S\;|\; |s| = 1 \land 0<\text{Arg}[s]<\pi/2\},\\ \nonumber 
S_c &= \{s\in S\;|\; |s| > 1 \land 0<\text{Arg}[s]<\pi/2\}. 
\end{align}
\ghl{
Observe that $S_{0,1,r}$ are real, $S_{i,\iota}$ are imaginary and $S_{u,c}$ are complex}.
From the real roots, we construct the factor
\begin{align} \nonumber
f_{\text{r}} &=\textstyle t^{\frac{|S_0|}{2}}  (t^2-1)^{\frac{|S_1|}{2}} \prod_{s\in S_r} \left(t^4-t^2(s^2+s^{-2})+1\right)^{\frac{1}{2}}, \\
\langle f_{\text{r}}\rangle &= \langle\text{OP}\rangle^{\frac{|S_0|}{2}}\langle\text{EN}\rangle^{\frac{|S_1|}{2}}\langle\text{EP}\rangle^{\frac{|S_r|}{2}},
\end{align}
The positiveness of $\tilde f(t)$ means that all real factors have even multiplicity. Thus $f_{\text{r}}$ is a polynomial. From the complex roots, we form
\begin{align}
f_{\text{i}} =& \left((t^2-1)^2+(2t)^2\right)^{\frac{|S_i|}{2}}, \\ \nonumber
f_{\iota} =&\textstyle\prod_{s\in S_\iota} (t^2-1)^2+(t(\Imp[s]+\Imp[s]^{-1}))^2, \\  \nonumber
f_{\text{u}} =&\textstyle\prod_{s\in S_\text{u}}(t^2-1)^2+(2t \sin{(\text{Arg}[s])})^2,\\ \nonumber
f_{\text{c}}=&\textstyle\prod_{s\in S_\text{c}}\left( t^4-t^2(|s|^{-2}+4\sin^2{(\text{Arg}[s])}+|s|^{2})+1\right)^2\\\nonumber
&\textstyle+ \left(2(t^3-t)\Imp[s]\left(1+|s|^{-2}\right)\right)^2.
\end{align}
 The symmetry of terms under the squares is one of $\langle\text{EP}\rangle,\langle\text{EN}\rangle,\langle\text{OP}\rangle,\langle\text{ON}\rangle$, and occur in a combination that forms a group under repeated application of the two-squares identity of Eq.~\ref{Eq:TSI}. Thus we can construct
\begin{align}
f_i f_{\iota}f_{\text{u}}f_{\text{c}} & =g^2+h^2,  \\ \nonumber
\langle g \rangle &= \langle\text{EN}\rangle^{\frac{|S_i|}{2}+|S_\text{u}|+|S_\iota|},\quad \langle h\rangle = \langle\text{OP}\rangle^{\frac{|S_i|}{2}+|S_\text{u}|+|S_\iota|}, \\ \nonumber
\tilde f(t)&=(K f_{\text{r}}g)^2+(K f_{\text{r}}h)^2.
\end{align}
For some combinations of multiplicities, this decomposition will not produce polynomials with the symmetry $\langle\text{EN}\rangle, \langle\text{OP}\rangle$ required by $\tilde B$, $\tilde D$. However, summing the multiplicities of these roots shows that $|S_i|$ is even and that such combinations do not exist.
From this decomposition, we compute $B(x), D(y)$ using
\begin{align}
\label{Eq:computeBD}
b_{k} =& \textstyle \frac{1}{2^{L}}\sum_{n=0}^{L}\tilde{b}_{2n}\left[\sum_{m=0}^n (-1)^m\binom{n}{m}\binom{L-n}{k-m}\right], \\ \nonumber
d_{2k+1} =& \textstyle  \frac{(-1)^{k}}{2^{L}}\sum_{n=0}^{L-1}\tilde{d}_{2n+1}\left[\sum_{m=0}^{n}\sum_{p=0}^{\lfloor L/2\rfloor}(-1)^{m}\binom{p}{k}\right. \\ \nonumber
&\textstyle\left.\binom{L-n-1}{2p-m}\binom{n}{m}\right].
\end{align}
As with Lem.~\ref{Thm:CDfromAB}, different choices of signs in the two-squares identity lead to multiple valid solutions. Computing the roots of $\tilde f(t)$ is the still most difficult step, but can be done in time $\mathcal{O}(\text{poly}(L))$.
\end{proof}
The case of even $L$ replaces Eq.~\ref{Eq:BDfromACpositivepolynomial} with $\tilde f(t)=(1+t^2)^{2L}-\tilde A^2(t)-((1-t^2)/(1+t^2))^2\tilde C^2(t)$ and we find $\tilde B$ with $\langle \text{EP}\rangle$ and $(1-t^2)\tilde D$ with $\langle \text{ON}\rangle$ symmetry . A similar root counting argument guarantees the existence of such solutions. The coefficients of $B(x), D(y)$ are then computed also using Eq.~\ref{Eq:computeBD}. This procedure carries through without modification for the other tuples (1), (2) of Thm.~\ref{Thm:achievablePartialABCD}.

\subsection{Selection of quantum response functions}
\label{Sec:Selection}
It should be clear that optimal composite gate design is a systematic process no more difficult than choosing one or two polynomials optimal for some objective function. Nevertheless, problem (\textbf{P4}) is that computing these optimal polynomials could still be a difficult task. However, the constraints on achievable partial tuples in Thms.~\ref{Thm:achievablePartialABCD}, \ref{Thm:achievable3PartialABCD} seem fairly lax, which lends hope that this could be done efficiently. In fact these constraints are consistent with textbook problems in approximation theory~\cite{Meinardus2012}.

It is at this point where a close connection with discrete-time signal processing~\cite{Oppenheim2010} is made. Efficient algorithms~\cite{Mcclellan1973,karam1999,Grenez1983,Lang2000,hofstetter1971} for designing polynomials optimal for arbitrary objective functions under a variety of optimality criteria have been extensively studied for finite-impulse response filters~\cite{Harris1978}. We thus inherit much of this machinery, and in many cases, existing polynomials consistent with achievability have already been found and are directly transferable. 

A most common optimality criterion is the Chebyshev norm: Let $P_\text{o}(x)$ be the objective function, with continuous weight function $W(x)>0$, to be approximated by a polynomial $P(x)$ of degree $L$ on a bounded subset $\mathcal{B}$ of the closed interval $\mathcal{B}\subset[-1,1)$ with the smallest Chebyshev error norm 
\begin{align}
\label{Eq:ChebyshevError}
\epsilon = \max_{x\in \mathcal{B}}{\left|W(x)\left(P(x)-P_{\text{o}}(x)\right)\right|}.
\end{align}
The unique best approximation can be computed efficiently by Remez-type exchange algorithms~\cite{Fraser1965}. Many variants exist such as where $P(x)$ is a trigonometric polynomial~\cite{Mcclellan1973}, bounded~\cite{Grenez1983}, subject other unary or linear constraint~\cite{Lang2000}, and even complex~\cite{karam1999}. Linear programming methods~\cite{Lang2000} provide an alternate solution. Efficient algorithms for other optimality criteria such as least squares are also available~\cite{lim1992weighted,vaidyanathan1987eigenfilters}.

These algorithms efficiently solve the problem of optimization over achievable quantum response functions $\hat{U}(\theta)$ where the objective functions are $2$-partial or $3$-partial tuples. Optimization for a $3$-partial objective function involves a single quadrature from $(A,B,C,D)$ together with a single real objective function $P_{\text{o}}(\theta)$. Thus we optimize over $P(\theta)$ for $P_{\text{o}}(\theta)$ in Eq.~\ref{Eq:ChebyshevError} subject to the constraints of Thm.~\ref{Thm:achievable3PartialABCD} for the corresponding quadrature. The slightly more complicated $2$-partial case instead specifies two quadratures and real objective functions $P_{\text{o},1}(\theta),P_{\text{o},2}(\theta)$. Thus we define $P_{\text{o}}(\theta)=P_{\text{o},1}(\theta)+iP_{\text{o},2}(\theta)$, and optimize over $P(\theta)=P_{1}(\theta)+iP_{2}(\theta)$ for $P_{\text{o}}(\theta)$ subject to the constraints of Thm.~\ref{Thm:achievablePartialABCD} for the corresponding quadratures. Note that the unitarity inequality constraint poses no difficulty as $|P(\theta)|^2=P^2_{1}(\theta)+P^2_{2}(\theta)$. 

\subsection{The methodology of composite quantum gates}
\label{Sec:methodology}
Our efforts lead us to a methodology for the design of single spin quantum response functions $\hat{U}(\theta)$ through composite quantum gates built from a sequence of $L$ primitive gates all rotating by $\theta$, but each with its own phase $\vec{\phi}=(\phi_1,...,\phi_L)$. The procedure is systematic, flexible, and most importantly, provably efficient:
\begin{description}[align= left,noitemsep]
\item[Problem statement] Given $L\ge 1$ and objective function $\hat{U}_o(\theta)$ for either $3$-partial or $2$-partial tuples, find the composite quantum gate that implements through $\vec{\phi}$ the optimal $\epsilon$-approximation to $\hat{U}_0(\theta)$.
\item[Solution procedure]
\item [(S1)] Check that $\hat{U}(\theta)$ is consistent with achievability. \\ 
\phantom{\quad}--\textit{Satisfies conditions of Thms.~\ref{Thm:achievablePartialABCD},\ref{Thm:achievable3PartialABCD}.}
\item [(S2)] Choose optimality criterion. \\
\phantom{\quad}--\textit{The Chebyshev norm is most common.}
\item [(S3)] Execute polynomial optimization algorithm over achievable partial tuples. \\
\phantom{\quad}--\textit{Remez-type algorithms are efficient.}
\item [(S4)] Compute achievable tuple from partial tuple.\\ 
\phantom{\quad}--\textit{This can be done efficiently by Lems.~\ref{Thm:CDfromAB},~\ref{Thm:BDfromAC}.}
\item [(S5)] Compute phases $\vec{\phi}$. \\
\phantom{\quad}--\textit{This can be done efficiently by Lem.~\ref{Thm:phifromABCD}.}
\end{description}

\section{Examples}
\label{Sec:Example}
\ghl{
Using the methodology in Sec.~\ref{Sec:methodology}, composite quantum gates with response function $\hat{U}(\theta)$ that minimize the error with respect to arbitrary objective functions $\hat{U}_o(\theta)$ can be efficiently designed. We illustrate this process with three examples of independent scientific interest: compensated population inversion gates, compensated broadband NOT gates, and compensated narrowband quantum gates.

Population inversion gates rotate states $\ket{0}$ to $\ket{1}$ and vice-versa, and come in two flavors. The \emph{broadband} variant implements this rotation with high probability across the widest bandwidth of $\theta\in\mathcal{B}$, meaning that the transition probability response function $p(\theta)$ from Eq.~\ref{Eq:P_Polynomial} is close to $1$. The \emph{narrowband} variant instead implements this rotation with low probability so $p(\theta)\approx 0$, except at a single point $p(\pi)=1$. \tjyoder{We discuss optimal design of these gates in Section~\ref{Sec:PopInvGates}.} As closed-form solutions for these gates are already known, and used extensively in NMR spectroscopy, they help build familiarity with the methodology in Sec.~\ref{Sec:methodology} when it is used to solve open questions in the next two examples.

\emph{Broadband} compensated NOT gates implement the rotation $\hat{R}_0(\pi)$ with high fidelity over the widest bandwidth of $\theta$ parameters. Whereas population inversion gates only succeed on initial states $\ket{0}$ to $\ket{1}$, NOT gates apply a $\pi$ rotation with a known phase for all input states. Such gates have been extensively studied for applying uniform rotations in the presence of drive field inhomogeneities, particularly in quantum computing applications, and our methodology, \tjyoder{presented in Sec.~\ref{Sec:OBGates}}, solves open questions regarding the scaling of bandwidth with sequence length as well as their efficient synthesis.

A complementary design problem addressed in Sec.~\ref{Sec:SpatialGates} is that of \emph{narrowband} compensated quantum gates. These instead apply a desired arbitrary rotation $\hat{R}_0(\chi)$ at a single $\theta$ value, and the identity rotation elsewhere over widest bandwidth of $\theta$ parameters. Such gates are highly relevant to minimizing crosstalk in the selective addressing of spins in arrays, particularly when spin-spin distances are below the diffraction limit, as might be found in architectures for scalable architectures of ion-trap quantum computation.

}
\subsection{Composite population inversion gates}\label{Sec:PopInvGates}
\ghl{
Population inversion gates maximize the bandwidth $\mathcal{B}$ over which the transition probability response function $p(\theta)$ from Eq.~\ref{Eq:P_Polynomial} is close to $1$ for the broadband variant, or close to $0$ for the narrowband variant. Note that in both cases, perfect population inversion occurs at $\theta=\pi$ for $L$ odd, owing to the fact that $A(0)=0$. Moreover, the optimal polynomials and phases for both variants turn out to be related by a simple transformation, so it suffices for us to consider only the broadband case.

Composite gates with these properties have been studied extensively for nuclear magnetic resonance and quantum computing applications. One approach to obtaining broadband behavior is with the maximally flat ansatz $p(\theta)=1-\mathcal{O}((\theta-\pi)^{2n})$~\cite{Torosov2011}. This exponentially suppresses errors in the transition probability to order $n$, thus $p(\theta)\approx 1$ over a wide range of $\theta$. Remarkably, the $\vec{\phi}$ that implement this profile can be found in closed form~\cite{Vitanov2011} with optimal sequence lengths $L=n$. More recently, a second approach has emerged~\cite{Low2015}, motivated by the following observation: as the flat ansatz $p(\theta)=1-\mathcal{O}((\theta-\pi)^n)$ only increases bandwidth indirectly through the suppression order $n$, better results can be obtained by directly optimizing for bandwidth, while ensuring that the worst-case error $\mathcal{I}$ remained bounded.

The procedure of Sec.~\ref{Sec:methodology} for odd $L$ formalizes this task as a straightforward optimization problem:
\\
\textbf{(S1)} Choose the objective function $\forall \theta\in\mathcal{B}=\pi+[-|\mathcal{B}|/2,|\mathcal{B}|/2],\;\hat{U}_{\text{o}}(\theta)=0$ for the $(A,0,\cdot,\cdot)$ 2-partial tuple. Since $p(\theta)=1-A^2$ is close to $1$ over $\mathcal{B}$, the \tjyoder{unitarity} constraint $C^2+D^2=1-A^2$ implies that a rotation $R_\phi(\pi)$ is approximated over $\mathcal{B}$, with an unspecified phase $\phi=\text{Arg}[C+iD]$ that varies with $\theta$. As consistency with Thm.~\ref{Thm:achievableABCD} requires that $A(1)=1$, this implies that identity is applied at $\theta=0$, thus $\mathcal{B}$ must not contain $\theta=0$.
\\
\textbf{(S2)} Choose the Chebyshev optimality criterion, where the best $A$ solves the minimax optimization problem
\begin{align}
\label{Eq:PIBandwidthOptimization}
\epsilon&=\min_{A}\max_{\theta\in\mathcal{B}}|A(x)|,\quad 
\epsilon^2=\mathcal{I},
\end{align}
where the worst-case transition probability over $\mathcal{B}$ is $1-\mathcal{I}$.
\\
\textbf{(S3)} Find the function $A$ that solves Eq.~\ref{Eq:PIBandwidthOptimization}. For consistency with Thm.~\ref{Thm:achievableABCD}, the optimization is over real odd polynomials $A$ bounded by $\forall  |x|\lessgtr 1,\; |A(x)|\lessgtr 1$.
\\
\textbf{(S4)} Using Lem.~\ref{Thm:CDfromAB}, compute the achievable tuple $(A,0,C,D)$  from the partial specification $(A,0,\cdot,\cdot)$.
\\
\textbf{(S5)} Compute $\vec{\phi}$ from $(A,0,C,D)$ using Lem.~\ref{Thm:phifromABCD}.

The solution to \textbf{(S3)} is the Dolph-Chebyshev window function~\cite{dolph1946,Lynch1997} famous in discrete-time signal processing.
\begin{align}
\label{Eq:DCpoly}
\text{DC}_{L,\mathcal{I}}(y) &= \sqrt{\mathcal{I}} T_L\left(\beta_{L,\mathcal{I}}x\right), \; \beta_{L,\mathcal{I}}=T_{L^{-1}}(\mathcal{I}^{-1/2}),
 \end{align}
where $T_n(x)=\cos{\left(n \arccos {(x)}\right)}$ are Chebyshev polynomials. Note the ripples of $\text{DC}^2_{L,\mathcal{I}}(x)$ bounded by $\mathcal{I}$ in Fig.~\ref{DCPlot}. This is in contrast to monotonic increase of the limiting function, indicated by the subscript $\text{f}$,
\begin{align}
\label{Eq:LimitDC}
\text{DC}_{L,\text{f}}(x)=\lim_{\mathcal{I}\rightarrow 0}\text{DC}_{L,\mathcal{I}}(x)&=x^L,
\end{align}
which is maximally flat at $x=0$, but has significantly narrower bandwidth. Using $x=\cos{(\theta/2)}$, the bandwidth in $\theta$ coordinates is to order $\mathcal{O}(\mathcal{I}^{\frac{3}{2L}})$
\begin{align}
\label{Eq:DCbandwidth}
|\mathcal{B}_{}|= 2^{3-\frac{1}{L}}\mathcal{I}^{\frac{1}{2L}}, \quad
|\mathcal{B}_{\text{f}}|= 4\mathcal{I}_{\text{f}}^{\frac{1}{2L}},\quad\frac{\mathcal{I}}{\mathcal{I}_\text{f}}=4^{1-L}.
\end{align}
Given the same target bandwidth, the worst-case error of $\text{DC}_{L,\mathcal{I}}$ is exponentially smaller than $\text{DC}_{L,\text{f}}$. Note also the quadratic difference in the scaling with $L$ of the bandwidth over which $\text{DC}_{L,\mathcal{I}}$ does \emph{not} approximate $F(x)=0$.
\begin{align}
\label{Eq:DCbarbandwidth}
\bar{\mathcal B}=\tfrac{4\text{arcsech}{\sqrt{\mathcal{I}}}}{L}+\mathcal{O}{(\tfrac{1}{L^{3}})},
\;\; \bar{\mathcal B}_{\text{f}}=4\sqrt{\tfrac{\log{\frac{1}{\mathcal{I}}}}{L}}+\mathcal{O}{(\tfrac{1}{L^{\frac32}})}.
\end{align}
The ripples in the amplitude are a generic feature of best polynomial approximations to functions in the Chebyshev norm. By sacrificing flatness, much smaller absolute variations in error $\epsilon$ can be achieved over some specified bandwidth $\mathcal{B}$. This is a common theme that will be revisited in the subsequent example.
}
\begin{figure}
\includegraphics[width=1.0\columnwidth]{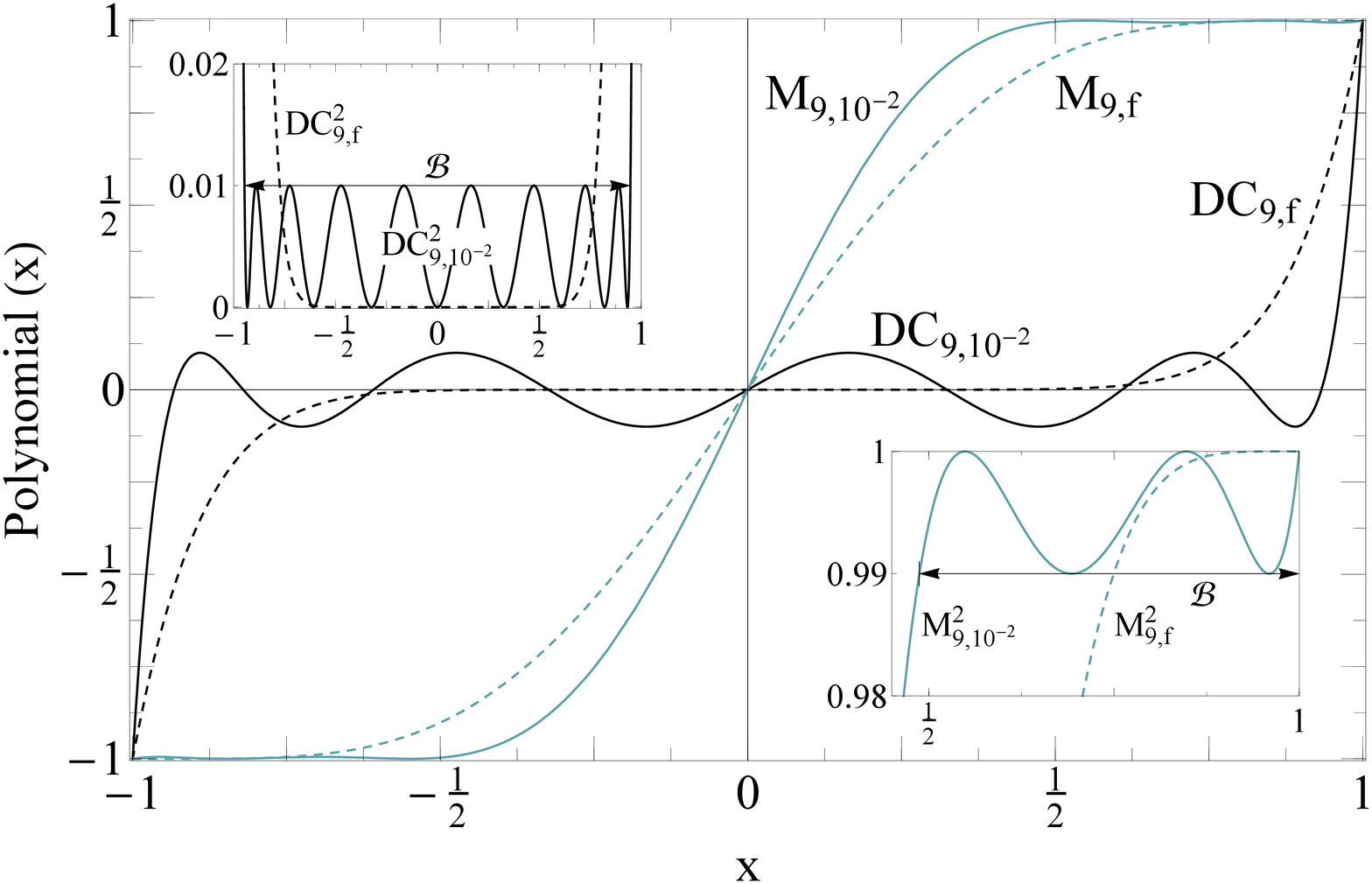}
\caption{\label{DCPlot}(color) 
$\text{DC}_{L,\mathcal{I}}$ (black), $\text{M}_{L,\mathcal{I}}$ (teal) polynomials plotted for $L=9$ and target worst-case infidelity $\mathcal{I}=10^{-2}$ (solid) and  $\mathcal{I}\rightarrow 0$ (dashed), indexed by $_\text{f}$. The observed ripples are a generic feature of bandwidth optimized polynomials, unlike those optimized for maximal flatness $\text{DC}_{\text{f}},\text{M}_{\text{f}}$. The inset plots their squares and defines the bandwidth $\mathcal{B}$ in $x$ coordinates.}
\end{figure}
\ghl{
Finding the phases that implement $(\text{DC}_{L,\mathcal{I}}(x),0,\cdot,\cdot)$ is then a straightforward computation through \textbf{(S4), (S5)}, and the results can be compared to the closed-form solutions from~\cite{Yoder2014,Low2015}: $\phi_{k}=\phi_{L-k+1}$ where $\phi_1 = 0$ and
\begin{align}
\phi_{k+1} &= \phi_{k} + 2 \tan^{-1}{\left[\tan{\left(\frac{2j k \pi}{L}\right)}\sqrt{1-\beta^{-2}_{L,\mathcal{I}}}\right]}.
\end{align}
The phases $\vec{\psi}$ for the narrowband variant $(\cdot,\cdot,\text{DC}_{L,\mathcal{I}}(x),0)$ are obtained by a simple `toggling' transformation~\cite{Wimperis1994} $\psi_j=-(-1)^j\phi_j-\sum_{k=1}^{j-1}2\phi_k$.
}
\subsection{Broadband compensated NOT gates}
\label{Sec:OBGates}
\ghl{
Broadband compensated NOT gates maximize the bandwidth $\mathcal{B}$ over which the fidelity response function with respect to the target gate $\hat{R}_0(\pi)$ is close to $1$. One option consistent with this goal is the choice of fidelity response functions $F_{\pi}(\theta)=1-\mathcal{O}((\theta-\pi)^{2n+2})$ that are maximally flat with respect to $(\theta-\pi)$. When the \emph{correction order} $n$ increases, deviations from $\theta=\pi$ are exponentially suppressed, resulting in improved approximations of the target gate over wider ranges of $\theta\in\mathcal{B}$. The central difficulty of this pursuit is finding the phases $\vec{\phi}$ that maximize $n$ for any given $L$. Unlike the population inversion gates of Sec.~\ref{Sec:PopInvGates}, this appears to be significantly more difficult; optimal length solutions for the $\vec{\phi}$ have only been found in closed-form for small $n\le 4$~\cite{Low2014}.

This problem has been attacked over the course of two decades, starting with Wimperis~\cite{Wimperis1994} who found the $\vec{\phi}$ in closed form for BB$_1$, a $L=5$ sequence with $n=2$. This was extended by Brown et. al.~\citep{Brown2004} with SK$n$ for arbitrary $L=\mathcal{O}(n^{3.09})$ through a recursive construction, and then by Jones~\cite{Jones2013,Husain2013} with F$_n$ to $L=\mathcal{O}(n^{1.59})$ in closed-form through sequence concatenation. The most recent effort~\cite{Low2014} proved a lower bound of $L=\Omega(n)$ and conjectured that the sequence BB$n$ ($W_n$ in~\cite{Husain2013}) with $L=2n+1$ is optimal through brute-force up to $L=25$. Using our methodology, we can easily prove this conjecture and efficiently compute its implementation $\vec\phi$. 

Moreover, our methodology enables a second option. Instead of optimizing for correction order, it is possible to directly minimize the worst-case infidelity $\mathcal{I}$, which is the experimental quantity of interest, over a target bandwidth $\mathcal{B}$. We find that doing so leads to an improvement in $\mathcal{I}$ that scales exponentially with $L$ over the maximally flat case. To prove these statements, we proceed with the design outline of Sec.~\ref{Sec:methodology} for odd $L$:
\\
\textbf{(S1)} Choose the objective function $\forall \theta\in\mathcal{B}=\pi+[-|\mathcal{B}|/2,|\mathcal{B}|/2],\;\hat{U}_{\text{o}}(\theta)=\hat{R}_0(\pi)=-i\sigma_x$ for the $(\cdot,\cdot,C,\cdot)$ 3-partial tuple. Provided that $\mathcal{B}$ does not contain the point $\theta=0$, this is consistent with the constraints of Thm.~\ref{Thm:achievable3PartialABCD}. This corresponds to finding a fidelity response function $F_{\pi}(\theta)=C^2(\sin{(\frac{\theta}{2})})$ that is close to $1$ across $\mathcal{B}$.
\\
\textbf{(S2)} The best fidelity response function for the maximally flat approach in prior art is obtained from the function $C$ that maximizes the correction order
\begin{align}
\label{Eq:NOTFlatOptimization}
n&=\max_{C}\{n\;|\;C(y)=1-\mathcal{O}((1-y)^{n+1})\},\\ \nonumber
\mathcal{I}&=1-\min_{\theta \in\mathcal{B}}F_\pi(\theta),\quad y\equiv\sin{(\theta/2)},
\end{align}
where $\mathcal{I}$ is the worst-case infidelity over the bandwidth $\mathcal{B}$. It is easy to verify that any such $C$ satisfies $F_\pi(\theta)=1-\mathcal{O}((\theta-\pi)^{2n+2})$. The more direct approach uses the Chebyshev optimality criterion, where the best $C$ solves the minimax optimization problem
\begin{align}
\label{Eq:NOTBandwidthOptimization}
\epsilon&=\min_{C}\max_{\theta\in\mathcal{B}}|C(y)-1|,\quad 
\mathcal{I}=1-(1-\epsilon)^2.
\end{align}
\\
\textbf{(S3)} Find the function $C$ that solves Eqs.~\ref{Eq:NOTFlatOptimization},~\ref{Eq:NOTBandwidthOptimization}. For consistency with Thm.~\ref{Thm:achievableABCD}, the optimization is over real, odd polynomials $C$ bounded by $|C(y)|\le 1\;,\forall y\in[-1,1]$.
\\
\textbf{(S4)} Using Lem.~\ref{Thm:BDfromAC}, compute the achievable tuple $(A,0,C,D)$  from the partial specification $(\cdot,0,C,\cdot)$.
\\
\textbf{(S5)} Compute $\vec{\phi}$ from $(A,0,C,D)$ using Lem.~\ref{Thm:phifromABCD}.

We now present the solutions to \textbf{(S3)} of this procedure. This is the most difficult step, as once $C$ is provided, the implementation $\vec{\phi}$ is a straightforward calculation. Eq.~\ref{Eq:NOTFlatOptimization} is solved by the the odd polynomial that satisfies the following $n+1$ independent linear constraints:
\begin{align}
\label{eq:Mflatconstraints}
C(1)=1,\quad \left.\frac{d^k}{d y^k}C(y)\right|_{y=1}=0,\quad k=1,2,...,n.
\end{align}
As a degree $L$ odd polynomial has $\frac{L+1}{2}$ free parameters, a degree $L=2n+1$ polynomial is necessary and sufficient. This is solved by the polynomial
\begin{align}
\label{Eq:NFlat}
\text{M}_{L,\text{f}}(y)=\sum_{j=0}^{(L-1)/2}\binom{L}{j}\left(\frac{1+y}{2}\right)^{L-j}\left(\frac{1-y}{2}\right)^j,
\end{align}
with an example $\text{M}_{9,\text{f}}$ plotted in Fig.~\ref{DCPlot}. The index $L$ indicates the degree, and the subscript $\text{f}$ indicates that this is a maximally flat polynomial. As $\text{M}_{L,\text{f}}(y)$ is monotonically decreasing from $y<1$,  the relation between infidelity $\mathcal{I}$ and bandwidth $\mathcal{B}$ is obtained by solving $\mathcal{I}=1-M^2_{L,\text{f}}(\cos{(|\mathcal{B}|/4)})$ to leading order:
\begin{align}
\label{eq:BBInfidelity}
\mathcal{I}=\left(\frac{|\mathcal{B}|}{8}\right)^{L+1}\frac{2^{L+5/2}}{\sqrt{\pi L}}\left[1+\mathcal{O}\left(\left(\frac{|\mathcal{B}|}{8}\right)^2+\frac{1}{L}\right)\right].
\end{align}
Thus given some target bandwidth $\mathcal{B}$ of high-fidelity operation, the composite quantum gate represented by $\text{BB}n=(\cdot,0,\text{M}_{2n+1,\text{f}}(y),\cdot)$ implements NOT with a worst-case fidelity that decreases exponentially with sequence length. This proves the $L=2n+1$ conjecture of~\cite{Low2014}.

The odd polynomials of degree $L$ that satisfy the Chebyshev error norm optimality criterion in Eq.~\ref{Eq:NOTBandwidthOptimization} can also be found. We label these polynomials $\text{M}_{L,\mathcal{I}}$, where $L$ indicates the degree, and $\mathcal{I}$ is the worst-case infidelity, which is directly related to the bandwidth $\mathcal{B}$. For $L=5$, we have a complicated looking expression
\begin{align}\nonumber
\label{Eq:N5polynomial}
\text{M}_{5,\mathcal{I}}&=\tfrac{(2 y_1+1) y^5-(4 y_1^3+3 y_1^2+2 y_1+1) y^3+(2 y_1^5+4 y_1^4+6 y_1^3+3 y_1^2) y}{2 y_1^3 (y_1+1){}^2},\\
\mathcal{I}&=
\tfrac{(y_ 1-1){}^3 (1+3 y_ 1+y_ 1^2){}^2 (1-2 y_ 1-4 y_ 1^2) (3+9 y_ 1+8 y_ 1^2){}^3}{3125 y_ 1^6 (1+y_ 1){}^4(1+2 y_ 1)^3},
\end{align}
parameterized implicitly through $y_1\in[\cos{(\pi/5)},1]$. For larger $L$, such as $\text{M}_{9,10^{-2}}$ in Fig.~\ref{DCPlot}, the $\text{M}_{L,\mathcal{I}}$ can always be computed numerically through the famous Parks--McClellan algorithm~\cite{Mcclellan1973} for finite impulse response filters. Remarkably, the Chebyshev error of this approximation problem is known~\cite{Eremenko2007}:
\begin{align}
\label{eq:OBInfidelity}
\epsilon&=
\left(\frac{1}{\sqrt{2\pi}}+o(1)\right)\frac{8\cos^2{(|\mathcal{B}|/8)}\tan^{L+1}{(|\mathcal{B}|/8)}}
{\sqrt{(L-1) \cos{(|\mathcal{B}|/4)}}}, 
\\ \nonumber
\mathcal{I}&=\left(\frac{|\mathcal{B}|}{8}\right)^{L+1}\frac{2^{7/2}}{\sqrt{\pi L}}\left[1+\mathcal{O}\left(\left(\frac{|\mathcal{B}|}{8}\right)^2+ \frac{1}{L}\right)\right].
\end{align}
By comparing Eqs.~\ref{eq:BBInfidelity}.~\ref{eq:OBInfidelity} in Fig.~\ref{XGateBandwidth}, it can be seen that for any target $\mathcal{B}$ and sequence length $L$, the composite quantum gate $\text{OB}n=(\cdot,0,\text{M}_{2n+1,\mathcal{I}}(y),\cdot)$ has a worst-case infidelity that improves on $\text{BB}n$ by an exponential factor $\mathcal{O}(2^{1-L})$.
}
\begin{figure}
\includegraphics[width=\columnwidth]{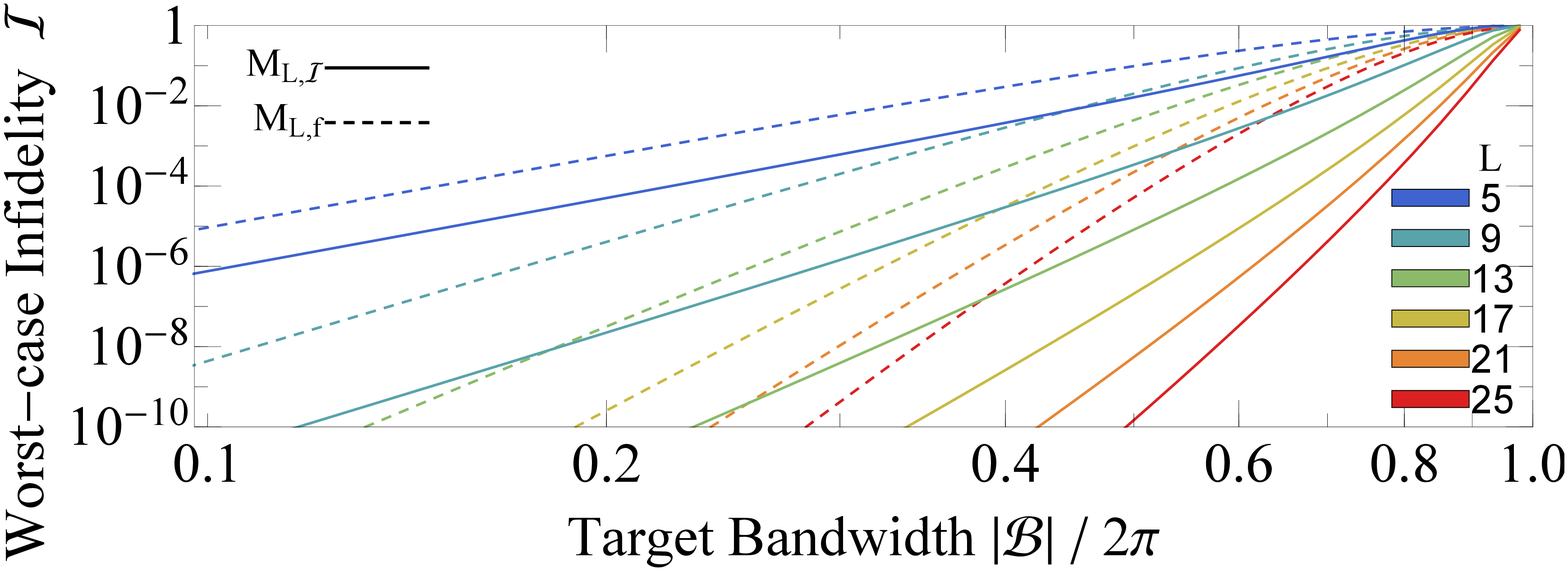}
\resizebox{\columnwidth}{!}{\begin{tabular}{c | c | c}
$\mathcal I$ & $L$ & $\vec{\phi}$ where $\phi_k=\phi_{L-k+1}$\\
\hline
Eq.~\ref{Eq:N5polynomial} & $5$ & $2(\tan^{-1}\tfrac{-t_2}{1+2x_1},\tan^{-1}t_2,0,...), t_2 = \sqrt{\tfrac{8 x_1^3 + 8 x_1^2 -1}{1+2 x_1}}$ \\
$10^{-2}$ & $9$ & $(2.987, 5.166, 4.021, 1.678, 2.815,...)$ \\
$10^{-4}$ & $9$ & $(2.889, 5.334, 4.042, 1.490, 2.926,...)$ \\
$10^{-6}$ & $9$ & $(2.844, 5.381, 4.034, 1.414, 2.976,...)$ \\
$10^{-2}$ & $13$ & $(2.390, 0.771, 2.791, 2.824, 2.115, 4.573, 4.888,...)$ \\
$10^{-4}$ & $13$ & $(2.233, 0.455, 2.853, 2.862, 1.838, 4.558, 5.041,...)$ \\
$10^{-6}$ & $13$ & $(2.159, 0.314, 2.874, 2.877, 1.677, 4.495, 5.092,...)$
\end{tabular}}
\caption{\label{XGateBandwidth}(color) Worst-case infidelity $\mathcal{I}$ of NOT gates $\text{OB}n=(\cdot,0,M_{2n+1,\mathcal{I}}(\sin{(\theta/2)}),\cdot)$ (solid, Eq.~\ref{eq:OBInfidelity}) optimized for target bandwidth $\theta\in\mathcal{B}$ compared to flatness-optimized NOT gate $\text{BB}n=(\cdot,0,M_{{2n+1},\text{f}}(\sin{(\theta/2)}),\cdot)$ (dashed, Eq.~\ref{eq:BBInfidelity}), plotted for $L=2n+1=5,9,...,25$ (from top). Observe that $\mathcal{I}$ for OB$n$ is exponentially smaller by factor $\approx 4^{n}$ than BB$n$. Alternatively, an OB$n$ gate can approximate NOT with infidelity at most $\mathcal{I}$ over a much wider bandwidth than BB$n$. The table provides examples of $\vec\phi$ for OB$n$ rounded to $3$ decimal places.}
\end{figure}
\ghl{
In contrast to the BB$n$ sequences that are fixed for each $n$, OB$n$ allows for an optimal design trade-off between bandwidth $\mathcal{B}$ and infidelity $\mathcal{I}$. As seen in Fig.~\ref{DCPlot}, this occurs by introducing \emph{equiripples} of equal amplitude bounded by $\mathcal{I}$, similar to the DC$_{L,\mathcal{I}}$ polynomials for population inversion gates. Thus, given the same performance targets, an extremely short OB$n$ gate can perform just as well as a significantly longer BB$n$ gate. In other words, maximizing the correction order only improves the achieved bandwidth indirectly, leading to a poor trade-off between $\mathcal{I}$ and $\mathcal{B}$, whereas better results are naturally achieved by optimizing for polynomials that directly solve Eq.~\ref{Eq:NOTBandwidthOptimization} by minimizing infidelity over a target bandwidth. 
}

\subsection{Composite quantum gates with sub-wavelength spatial selectivity}
\label{Sec:SpatialGates}
\ghl{
Narrowband compensated gates maximize the bandwidth $\mathcal{B}$ over which the fidelity response function with respect to identity $\hat{\mathbb{1}}$ is close to $1$, except at a single point $\theta$ where an arbitrary target rotation $\hat{R}_{0}(\chi)$ is applied. Although the direct approach is computing new polynomials $(A,\cdot,C,\cdot)$ that satisfy these properties, we can reuse the polynomials $M_{L,\mathcal{I}}$ from Sec.~\ref{Sec:OBGates} by making certain assumptions on the physical system. In the following, we also assume that $|\chi|\le \pi$.

Consider a Gaussian beam of fixed width $\lambda$. As a function of position $r$, this beam has a spatially-varying Rabi frequency $\Omega(r)=\Omega_0 e^{-r^2/2 \lambda^2}$. Thus when applied for time $t_0$, a primitive gate $\hat{R}_\phi(\theta(r))$ that also varies as a function of position is generated, where $\theta(r)=\theta_0 e^{-r^2/2 \lambda^2}$ and $\theta_0=\Omega_0t_0$. At $r=0$, one can choose $t_0,\phi$ such that the target rotation $\chi=\theta_0$ is implemented, and due to exponential decay of the Gaussian beam, moving away from the beam center approximates the identity gate with infidelity $\mathcal{I}(r)=\sin^2{(\frac{\theta(r)}{2})}$. Thus at distance $r/\lambda\ge d/\lambda  = \bar{\mathcal{B}}_1= \log^{1/2}{\frac{\pi^2}{4\mathcal{I}}}$ from the beam center, the worst-case infidelity is $\mathcal{I}$. As the minimum possible beam width $\lambda$ is the wavelength of light, selective addressing below the diffraction limit appears impossible. However, even this can be overcome with a carefully designed composite quantum gate.

Narrowband composite gates of length $L$ applicable to this scenario have been widely studied. For instance, ~\cite{Wimperis1994,Merrill2014} report beam width reductions by factor $\frac{d}{\lambda}\approx 0.7 \bar{\mathcal{B}}_1$~\cite{Wimperis1994,Merrill2014}. Further reduction is possible with longer composite gates~\cite{Low2014}, but with poor scaling $\frac{d}{\lambda}=\mathcal{O}(L^{-{1/4}})$. 

\tjyoder{A better narrowband composite gate results from} using the broadband identity gate $\text{ID}=(\text{M}_{L,\mathcal{I}}(x),0,\cdot,\cdot)$ designed from the $\text{M}_{L,\mathcal{I}}$ polynomial in presented in Sec.~\ref{Sec:OBGates}. \tjyoder{Then,} the fidelity response function with respect to identity is $F_{0}(\theta)=\text{M}_{L,\mathcal{I}}^2(x)$, which, as we now show, corresponds to a quadratic improvement \tjyoder{of} $\frac{d}{\lambda}=\mathcal{O}(L^{-{1/2}})$.

Let us compose $\text{ID}$ with the Gaussian beam to produce the spatially-varying quantum response function
\begin{align}
\hat{U}_{\text{space}}(r)=\text{ID}(\theta_0 e^{-r^2/2 \lambda^2})=\text{ID}(\theta_0)+\mathcal{O}(r^2),
\end{align}
for some choice $|\theta_0|\le \pi$. Note that $\hat{U}_{\text{space}}(r)$ is stable with respect to beam-pointing errors in $r$ due to the vanishing first derivative. The degree of spatial selectivity is computed from the bandwidth in Eq.~\ref{eq:OBInfidelity} by substituting $|\mathcal{B}|=2\theta_0e^{-\bar{\mathcal{B}}_{\text{space}}^2/2\lambda^2}$ and solving for $r/\lambda$. Thus, identity is implemented with infidelity at most $\mathcal{I}$ at all $r\ge d\ge \lambda\bar{\mathcal{B}}_{\text{space}}$ as seen in Fig.~\ref{Fig:NarrowbandPlot}, where to leading order $\mathcal{O}(L^{-1/2})$,
\begin{align}
\frac{\bar{\mathcal{B}}_{\text{space}}}{\lambda}=2\sqrt{\tfrac{\log{(1/\mathcal{I})}+\frac{1}{2}\log{(2^{7}/(L\pi))}}{L+1}-\ln{\tfrac{4}{\theta_0}}}.
\end{align}
Meanwhile at $r=0$, we obtain the gate \begin{align}
\hat{U}_{\text{space}}(0)=R_{\gamma}\left(2\cos^{-1}\left({\text{M}_{L,\mathcal{I}}\left(\cos{(\theta_0/2)}\right)}\right)\right).
\end{align}
where $\gamma=\text{Arg}[C(\sin{\frac{\theta_0}{2}})+i D(\sin{\frac{\theta_0}{2}})]$. The desired rotation $R_0(\chi)$ is thus obtained by choosing $\theta_0$ such that $\cos{(\chi/2)}=\text{M}_{L,\mathcal{I}}(\cos{(\theta_0/2)})$ and rotating all phases $\phi_k\leftarrow\phi_k+\gamma$, which follows from $e^{-i\frac{\gamma}{2}\sigma_z}\hat{U}_{\text{space}}(0)e^{i\frac{\gamma}{2}\sigma_z}=R_{0}(\chi)$.
}

\begin{figure}
\includegraphics[width=1.0\columnwidth]{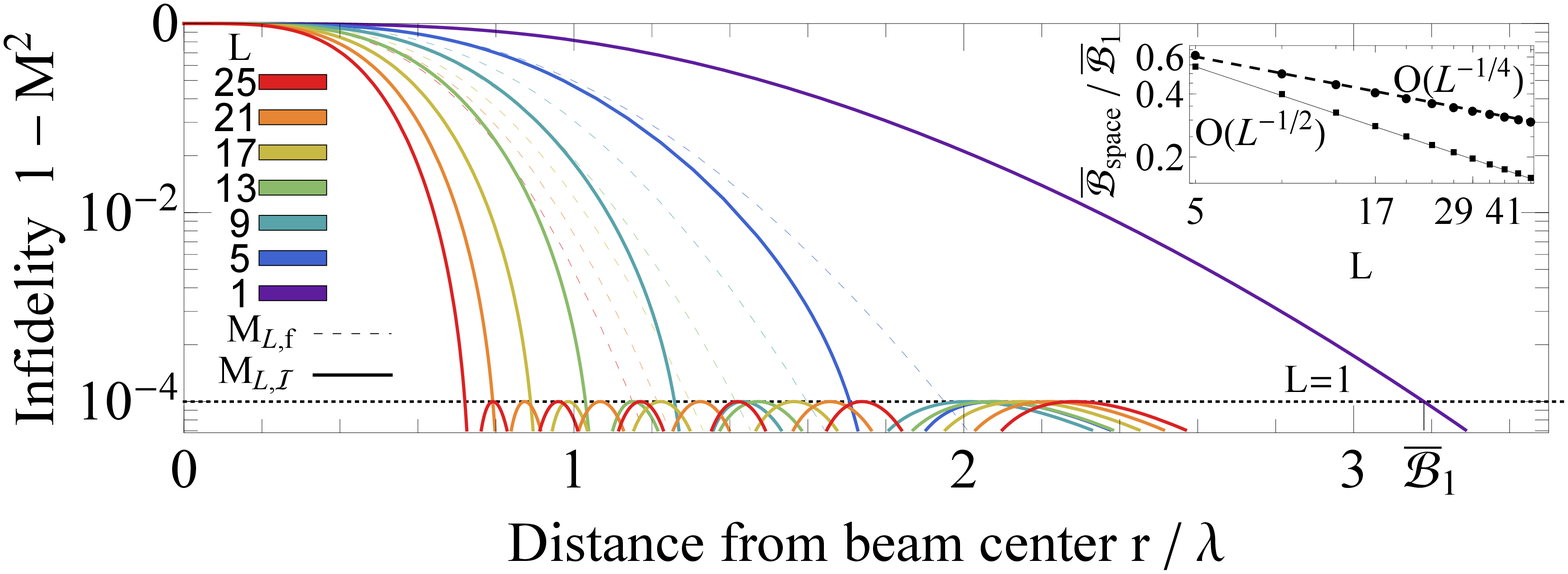}
\resizebox{\columnwidth}{!}{\begin{tabular}{c | c | c}
$\mathcal I$ & $L$ & $\vec{\phi}$ where $\phi_k=\phi_{L-k+1}$\\
\hline
Eq.~\ref{Eq:N5polynomial} & $5$ & $2(0, \tan^{-1}t_1,\tan^{-1}\tfrac{t_1}{1-4x_1^2},...), t_1 = \sqrt{8 x_1^3 + 8 x_1^2 -1}$ \\
$10^{-2}$ & $9$ & $(0, 0.772, 4.357, 2.827, 3.886,...)$ \\
$10^{-4}$ & $9$ & $(0, 1.087, 4.501, 2.707, 3.961,...)$ \\
$10^{-6}$ & $9$ & $(0, 1.235, 4.601, 2.695, 4.029,...)$ \\
$10^{-2}$ & $13$ & $(0, 1.450, 3.683, 2.501, 3.220, 5.577, 5.728,...)$ \\
$10^{-4}$ & $13$ & $(0, 1.872, 4.326, 2.844, 3.602, 6.271, 0.257,...)$ \\
$10^{-6}$ & $13$ & $(0, 2.077, 4.616, 2.978, 3.742, 0.250, 0.578,...)$
\end{tabular}}
\caption{\label{Fig:NarrowbandPlot}
(color)
Infidelity of spatially selective composite gates $(M_{L,10^{-4}}(\cos{(\frac{\theta_0}{2} e^{- r^2/\lambda^2})}),0,\cdot,\cdot)$ plotted for $\theta_0=\pi$ and $L=1,...,25$~(solid, from right). The effective beam width $\bar{\mathcal{B}}_{\text{space}}=\mathcal{O}(L^{-1/2})$(inset) beyond which the identity gate is well-approximated is dramatically reduced over that of a single gate $\bar{\mathcal{B}}_1$. By varying $\theta_0$, arbitrary unitary gates can be applied at $r=0$ with high beam-pointing stability. Poorer scaling $\bar{\mathcal{B}}_{\text{space}}=\mathcal{O}(L^{-1/4})$ results from using the flat $(M_{L,\text{f}},0,\cdot,\cdot)$ (dashed). The table provides examples of $\vec{\phi}$ to $3$ decimal places. 
}
\end{figure}
\ghl{
The optimality of these results follows from the construction of $\text{M}_{L,\mathcal{I}}$ as optimal bandwidth polynomials. In particular, using the flat polynomial $\text{M}_{L,\text{f}}(x)$ leads to the scaling $\bar{\mathcal{B}}_{\text{space}}=\mathcal{O}(L^{-1/4})$ found in prior art and Fig.~\ref{Fig:NarrowbandPlot}(inset). 
}
\section{Conclusion}
\label{Sec:Conclusion}
We have presented and applied a methodology, analogous to the Shinnar-LeRoux algorithm but with different controls, for the systematic design of \ghl{resonant equiangular} composite quantum gates of length $L$ on a single spin. In particular, we show that all steps are efficient with time complexity $\mathcal{O}(\text{poly}(L))$, and provide an extremely rigorous characterization of achievable quantum response functions. Moreover, the elegant and practical connection made with discrete-time signal processing allows us to inherit and adapt many existing algorithms and polynomials used in the design of \emph{classical} response functions for this \emph{quantum} problem. Much potential remains untapped there, and interdisciplinary exchange could spur the discovery of further connections, leading to the development of  previously intractable applications. Indeed, this relationship has already proven fruitful in surprising directions, such as recent work furnishing optimal algorithms for important problems such as Hamiltonian simulation~\cite{Low2016HamSim,Low2016HamSimQubitization} on a quantum computer.

In fact, our work bridges discrete-time signal processing and quantum query algorithms for evaluating symmetric boolean functions. The idea here is that the $\text{SU}(2)$ space of a single-qubit, as we study here, is isomorphic to the $\text{SU}(2)$ subspace spanned by a uniform superposition of marked states and a uniform superposition of unmarked state in such a query problem. Query algorithms can be built to calculate a boolean function $f:\{0,1\}^n\rightarrow\{0,1\}$ that depends only on the number of marked states (i.e. $f(x)=\tilde f(|x|)$ for some $\tilde f:\{0,1,\dots,n\}\rightarrow\{0,1\}$) and do so with a Grover-type algorithm of partial reflections (e.g. \cite{grover1996fast,brassard1998quantum,Hoyer2000arbitrary,grover2005fixed}). Thus, the same methods introduced here also give a way to determine how many reflections (analogous to our $L$) and what reflections (analogous to our $\phi_j$) are required to compute any particular symmetric boolean function, achieving the known lower bounds for this problem, which (not) coincidentally are also derived using polynomials \cite{beals2001quantum}. As examples of this correspondence, $DC_{L,\mathcal{I}}$ is an optimal solution for OR \cite{Yoder2014}, $\text{M}_{L,\mathcal{I}}$ is optimal for Majority.

Various thought provoking extensions are also motivated. The set of achievable quantum response functions is changed by introducing elements such as additional (possibly continuous) control parameters, disturbances, coupled spins~\cite{tomita2010multi,Merrill2014,Ivanov2015Composite}, or open systems~\cite{Khodjasteh2010arbitrarily,Soare2014}. These all enable their own unique applications, but also appear difficult to solve somehow systematically and intuitively. Our success in the case of composite gates contributes supporting evidence that a useful characterization as well as efficient methods for these more complex design problems could exist.

\section{Acknowledgements}
G.H.Low acknowledges funding by NSF RQCC Project No.1111337 and NRO. T.J.Yoder acknowledges funding by NDSEG. We thank Alan Oppenheim and Tom Baran for inspiring discussions, and connections made possible by their 6.341x open online MITx course. We thank Yuan Su for useful comments on the paper.

\bibliography{Methodology180201}

\end{document}